\newtheorem{theorem}{Theorem}[section]
\newtheorem{conjecture}[theorem]{Conjecture}
\newtheorem{corollary}[theorem]{Corollary}
\newtheorem{lemma}[theorem]{Lemma}
\newtheorem*{theorem*}{Theorem}
\theoremstyle{definition}
\newtheorem{definition}[theorem]{Definition}
\newtheoremstyle{note}
{\topsep}
{\topsep}
{}
{}
{\itshape}
{.}
{5pt plus 1pt minus 1pt}
{}
\theoremstyle{note}
\newtheorem{remark}[theorem]{Remark}
\newcommand{\N}{\mathbb{N}}
\newcommand{\Z}{\mathbb{Z}}
\newcommand{\Q}{\mathbb{Q}}
\newcommand{\R}{\mathbb{R}}
\newcommand{\A}{\mathbb{A}}
\newcommand{\B}{\mathbb{B}}
\newcommand{\M}{\mathcal{M}}
\newcommand{\BB}{\mathrm{B}}
\NewDocumentCommand\Bstate{D<>{\vec\lambda} O{\Phi}}{\ket{\BB(#1, #2)}}
\NewDocumentCommand\BstateNoKet{D<>{\vec\lambda} O{\Phi}}{\BB(#1, #2)}
\NewDocumentCommand\Bsimple{D<>{\lambda}}{\ket{\BB(#1)}}
\newcommand{\piovertwointerval}{\left[0, \piovertwo\right)}
\newcommand{\piovertwo}{\frac{\pi}{2}}
\newcommand{\pioverN}{\frac{\pi}{N}}
\newcommand{\modtwopi}{\mod{2\pi}}
\newcommand{\twolambdaszero}{interpolant}
\newcommand{\regular}{$N$-regular}
\DeclareMathOperator{\diag}{diag}
\title{A Classification Program for \\ Nonlocality Paradoxes of Three Qubits}
\author{Nadish de Silva \qquad\qquad  Santanil Jana
\qquad\qquad
Ming Yin \
\institute{Department of Mathematics, Simon Fraser University, Burnaby, BC, Canada}
}
\begin{document}
\maketitle

\begin{abstract}
Nonlocality is a quintessential signature of nonclassical behaviour and a resource for quantum advantages in communication and computation.  The paradoxical correlations witnessed by strong nonlocality undergird the standard probabilistic form of nonlocality and provide optimal advantages in numerous informational tasks.

Three-qubit systems are the simplest ones that admit strong nonlocality.  Abramsky et al.\ (\textit{TQC}, 2017) established the existence of an infinite family of three-qubit paradoxes, beyond the well-known GHZ paradox, which exhibited a novel conditional structure.

In this work, we introduce several new infinite families of three-qubit paradoxes and articulate a detailed roadmap towards the complete classification of all three-qubit nonlocality paradoxes.  In particular, we prove that our paradoxes exhaust all those satisfying reasonable regularity conditions.  We give an example of a highly exotic paradox and place constraints on the search for new exotic paradoxes.  We conjecture that all paradoxes must involve states from a one-parameter family and provide significant evidence in support of this conjecture.
\end{abstract}

\section{Introduction}  

Bell nonlocality \cite{Bell1964, Bell1966} describes how certain quantum correlations cannot be reproduced by a local hidden variable model.  
Beyond its original foundational significance, it has more recently become the subject of intense research interest as a resource for realising advantages in quantum information-theoretic tasks.

The original arguments due to Bell and Clauser-Horne-Shimony-Holt \cite{CHSH} derive probabilistic inequalities satisfied by classical correlations, yet violated by certain quantum measurements on an entangled state. Greenberger et al.\ \cite{Greenberger_1990, Greenberger_1989} gave a stronger argument for nonlocality, involving the GHZ state, that does not rely on inequalities; rather it involves a logical argument that relies only on the \textit{possibilistic} data concerning outcomes of joint measurements.  A set of constraints is given that is satisfied by every empirically observable joint outcome, yet when taken together, form a paradox.  That is, no classical local hidden variable is consistent with the empirical observations of certain quantum systems.

Modern structural frameworks unify and generalise these concepts beyond the few previously known examples. Logical paradoxes were shown by Abramsky-Brandenburger \cite{Abramsky2011} in their sheaf-theoretic framework to unify diverse examples of \textit{strong nonlocality}.  Paradoxes undergird probabilistic nonlocality in that all nonlocal correlations are dilutions of paradoxes by classical randomness. The nonlocal fraction \cite{Barrett_2006, Elitzur_1992} measures nonlocality by quantifying the degree to which correlations are paradoxical.

Bell nonlocality now plays a pivotal role as a resource in quantum information, underpinning advantages in, e.g.\ nonlocal games \cite{Cleve_2004}, device-independent quantum cryptography \cite{colbeck2009quantum,Grasselli_2023}, and randomness generation \cite{acin2016certified}.  Nonlocality paradoxes also serve as resources in quantum computation.  For example, Anders-Browne \cite{Anders_2009} showed how access to measurements on GHZ states can promote a linear computer to classical universality; Raussendorf \cite{Raussendorf_2013} extended this to general measurement-based computation.  GHZ-type paradoxes also played a critical role in the result of Bravyi-Gosset-K\"onig on unconditional insimulability by shallow circuits \cite{Bravyi_2018}. Many results on nonlocality and quantum advantage show that advantages scale with the nonlocal fraction with paradoxes yielding deterministic advantages.

Motivated by the importance of nonlocality paradoxes in the study of quantum foundations and quantum advantage, we develop a program of exhaustively classifying all nonlocality paradoxes realisable by systems of three qubits.  This is the simplest system, in the sense of having minimal dimension, admitting strong nonlocality.  The first step in this direction was taken by Abramsky et al.\ \cite{Abramsky2017}, who showed that three-qubit nonlocality paradoxes must necessarily involve states LU-equivalent to ones in the SLOCC class of GHZ and local measurements in the equatorial $XY$-plane.  They further exhibited a family of such three-qubit paradoxes, indexed by even positive integers.  These paradoxes revealed a striking conditional structure not previously described. Raussendorf \cite{Raussendorf_2023} gave a topological interpretation of them.

A complete classification of three-qubit nonlocality paradoxes is a difficult mathematical question; we articulate a detailed roadmap towards its resolution and prove many key steps.  
Further study of the structure of quantum paradoxes has potential applications across quantum information and computing.

\subsection{Summary of main results}
\begin{itemize}
    \item We study paradoxes up to a notion of equivalence based on local unitary equivalence and permutations of qubits.  We restrict ourselves to paradoxes that are minimal in the sense of not involving any extraneous measurements.  In Section \ref{section:2}, we define the family of \textit{interpolant states} to be those that interpolate between $\ket{\text{GHZ}}$ and $\ket{\text{Bell}} \otimes \ket{+}$ via a parameter $\lambda \in \piovertwointerval$.  All previously known paradoxes and the ones we demonstrate below involve these states.  We provide strong theoretical evidence in Section \ref{section:4.classification} for our Conjecture~\ref{conj:two-lambda-zero} that all three-qubit nonlocality paradoxes are equivalent to one using an interpolant state.
    \item We establish some constraints on the structure of impossible events for a fixed context that apply in any three-qubit model (see Lemmas \ref{lem:delta-facts} and \ref{lem:impossible-events}).  Any two impossible events must differ in at least two outcomes; thus, every context can have at most four impossible events.
    \item Any paradox using an interpolant state exhibits a conditional $\Z_2$-linear structure (Lemma \ref{lem:inconsistent}).  \textit{Maximal rank} paradoxes are those witnessed by summing \textit{all} linear equations to find $0=1$.
    \item We limit how many impossible events any choice of measurements on an interpolant state can yield overall and define a notion of \textit{maximally impossible} quantum scenarios of an interpolant state and measurement sets.  We focus on these quantum scenarios, following the intuition that paradoxes require many impossible events, and show how their measurement sets are constrained (Lemmas \ref{lem:M1=M2} and \ref{lem:delta_beta_restrict}).

    \item Theorem~\ref{thm:new_paradoxes} introduces several new infinite families of three-qubit nonlocality paradoxes.  One family is an extension of \cite[Theorem~8]{Abramsky2017}, which considered $\lambda_N = \piovertwo - \pioverN$ indexed by even $N\in \N$. Our construction, indexed by a tuple $(N,t)$, corresponds to a subset of the rationals and includes infinitely more paradoxes.  We give two more infinite families with significantly differing structure.  Theorem \ref{thm:inequivalent} shows that all our paradoxes are in distinct equivalence classes.   
    \item We give a partial classification of three-qubit nonlocality paradoxes up to equivalence in Theorem \ref{thm:partial_classification}.  We define \emph{\regular} quantum scenarios of a state and measurement sets that are maximally impossible, maximal rank, and use two measurements on the third qubit.  Our families of paradoxes exhaust all \emph{\regular} paradoxes up to equivalence.

    \item We give an example of a paradox that is not {\regular} and thus exhibits a very exotic structure.  We also study the consequences of relaxing the assumptions of regularity and indicate the most promising directions for completing the classification of quantum tripartite nonlocality paradoxes.

\end{itemize}

\paragraph{Outline.}Section \ref{section:2} summarises background material on nonlocality, and three-qubit paradoxes. Section~\ref{section:3} establishes our technical foundations by articulating the logical structure of reasonably regular paradoxes. In Section \ref{section:4}, we present our main results, including the presentation of new families of paradoxes in Subsection \ref{section:4.new_families}, a partial classification of three-qubit paradoxes in Subsection \ref{section:4.classification}, and a proposed roadmap towards a complete classification in Subsection \ref{section:4.complete_class}.  The appendices contain some detailed proofs and figures illustrating exotic paradoxes.

\section{Background} 
\label{section:2}
\subsection{Strong nonlocality}
We recall the basic definitions of the 
Abramsky-Brandenburger \cite{Abramsky2011} framework for nonlocality and contextuality (adapted to the setting of $n$-qubit nonlocality), and the notation of Abramsky et al.\ \cite{Abramsky2017}.

\textbf{Measurement scenarios} provide an abstract framework for describing experimental setups. Typically, a measurement scenario $(\mathcal{X}, \mathcal{O}, \mathcal{C})$ is defined by a set of measurement labels $\mathcal{X}$, a set of possible outcomes $\mathcal{O}$, and a cover $\mathcal{C}$ of $\mathcal{X}$, consisting of measurement contexts, which are maximal sets of measurements that can be jointly performed. Throughout this paper, we will set $\mathcal{O}$ to be $\Z_2 = \{0,1\}$ since our focus will be on quantum realisable paradoxes and projective measurements. 

A \textbf{Bell measurement scenario} $\M$ on $n$ qubits is an $n$-tuple of finite sets $(M_1, \ldots, M_n)$ with each $M_i$ a set of measurements on the $i$-th qubit. The \textbf{contexts} of $\M$ are given by $\mathcal{C} = \prod_{i=1}^n M_i$, i.e.\ choices of exactly one measurement per qubit. The \textbf{empirical model} $\varepsilon(\ket{\psi}, \M)$ yielded by a \textbf{quantum scenario} $(\ket{\psi}, \M)$ is a set of probability distributions $P_C : \mathcal{O}^n \to [0,1]$, indexed by contexts $C \in \mathcal{C}$, given by the Born rule.

A local measurement can be written in the form $E_{\theta, \varphi} := \sin\left(\theta\right)[\cos\left(\varphi\right) X + \sin\left(\varphi\right) Y] + \cos\left(\theta\right) Z$, where $\theta \in \piovertwointerval$ and $\varphi \in [0, 2\pi)$. This is a measurement with $+1$ eigenstate $\ket{\theta, \varphi} := \frac{1}{\sqrt{2}} (\cos\frac{\theta}{2}\ket{0} + e^{i\varphi}\sin\frac{\theta}{2} \ket{1})$ and $-1$ eigenstate $\ket{\pi - \theta, \varphi + \pi}$. Thus we can label a context by a tuple $(\vec\theta, \vec\varphi) :=$\linebreak $((\theta_1, \varphi_1),\ldots,(\theta_n, \varphi_n))$. We can then denote an \textbf{event} by $(\vec\theta, \vec\varphi) \rightarrow \vec o$, where $\vec o \in \mathcal{O}^n$ labels measurement outcomes (relabelling $+1, -1$ to $0, 1$ respectively).

An empirical model $\varepsilon(\ket{\psi}, \M)$ is \textbf{strongly nonlocal} if, given any \textbf{global assignment} $g: \bigsqcup_{i=1}^n M_i \rightarrow \mathcal{O}$ of outcomes to the measurements of $\M$, there exists a context $(\vec\theta, \vec\varphi)$ such that the event $(\vec\theta, \vec\varphi) \rightarrow (g(\theta_1, \varphi_1), \ldots, g(\theta_n, \varphi_n))$ is impossible. In quantum terms, if we denote the resulting eigenstate due to the event $(\vec\theta, \vec\varphi) \rightarrow \vec o$ by $\ket{(\vec\theta, \vec\varphi) \rightarrow \vec o}$, then the event is \textbf{impossible} if and only if $\braket{(\vec\theta, \vec\varphi) \rightarrow \vec o\, |\, \psi} = 0$. We refer to a quantum scenario $(\ket{\psi}, \M)$ such that $\varepsilon(\ket{\psi}, \M)$ is strongly nonlocal as a \textbf{(quantum nonlocality) paradox}.

\subsection{Three-qubit nonlocality paradoxes}
Brassard et al.\ \cite{Brassard_2005} proved that two-qubit quantum states do not exhibit strongly nonlocal behaviour. Thus, at least three qubits are required for strong nonlocality. Abramsky et al.\ \cite{Abramsky2017} showed the following:
\begin{theorem}[{\cite[Theorem 6]{Abramsky2017}}]
    A tripartite quantum state admitting strong nonlocality must be in the SLOCC class of the GHZ state and, in particular, must be \emph{balanced}. Moreover, any such strongly nonlocal behaviour can be witnessed using only \emph{equatorial measurements}.
\end{theorem}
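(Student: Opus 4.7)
The plan is to work up to local unitary (LU) equivalence, since strong nonlocality and the property of measurements being equatorial are both preserved under LU transformations (acting simultaneously on the state and on the measurement axes). One would first reduce the given three-qubit pure state to a canonical form---for instance, the Ac\'{\i}n et al.\ canonical form---which cleanly separates the GHZ and W SLOCC classes and further isolates the LU invariants within each class. From there, the argument branches into (a) eliminating the W-class, (b) forcing balance within the GHZ-class, and (c) reducing all witnessing measurements to the equatorial plane.

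I would first rule out the W-class by an explicit global-assignment construction: for every measurement context on a canonical W-class state, exhibit an assignment $g : \bigsqcup_{i=1}^{3} M_i \to \mathcal{O}$ whose induced event has nonzero Born amplitude. Structurally, W-class amplitudes cannot all cancel against a single parity outcome consistently across every context, and one can make this precise by showing a residual product-state summand survives every local projective measurement. Next, for GHZ-class states in canonical form, expand the impossibility condition $\braket{(\vec\theta,\vec\varphi)\to\vec o \, | \, \psi} = 0$ as a trigonometric polynomial in the $\theta_i/2$ and $\varphi_i$, and analyse when the union over all contexts of its zero set covers every global assignment. Imbalance in the canonical coefficients forces this variety to be too ``thin'' to achieve full coverage, so an uncovered global assignment can be constructed explicitly; this forces the state, up to LU, to be the balanced GHZ representative $\tfrac{1}{\sqrt{2}}(\ket{000}+\ket{111})$. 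Finally, any deviation $\theta_i \neq \piovertwo$ produces partner-event amplitudes of comparable size to the primary one, undermining the extremal parity cancellation responsible for impossibility; so, after a further LU rotation absorbing the polar components, the witnessing measurements may all be taken equatorial.

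The main obstacle I foresee is the pair of non-existence steps: ruling out the W-class and ruling out non-equatorial witnesses. Both are universal claims over the continuous parameter space of measurement angles, which precludes any direct enumeration. The cleanest route is likely to identify an algebraic obstruction---either a polynomial invariant of the state that forces the existence of a consistent global assignment, or a real-algebraic condition on the joint zero locus of the relevant trigonometric polynomials---and to show it is incompatible with the strong contextuality requirement that every global assignment be killed by some context. I expect this real-algebraic analysis to be the technically heaviest portion of the proof.
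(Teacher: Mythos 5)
This statement is imported verbatim from Abramsky et al.\ (\cite[Theorem~6]{Abramsky2017}); the paper you were given does not prove it, so there is no in-paper proof to compare against. What can be said is how your outline relates to the known argument in that reference, and whether it would close.

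Your skeleton --- pass to a canonical form separating the GHZ and W SLOCC classes, eliminate the W class by exhibiting consistent global assignments, force balance within the GHZ class by showing the zero locus of the amplitudes is otherwise too sparse to kill every assignment, and then argue that non-equatorial measurements cannot contribute to impossibility --- is the same broad strategy as the original proof. However, as written it is a plan rather than a proof: the three universal non-existence claims (every W-class state admits a consistent global assignment; every unbalanced GHZ-class state admits one; non-equatorial measurements never help) are exactly where all the content lives, and you explicitly defer each of them. One step as stated is also not obviously salvageable: the claim that ``a residual product-state summand survives every local projective measurement'' of a W-class state is not a precise statement and is misleading --- W-class states are genuinely tripartite entangled with no product summand, and the actual elimination of that class proceeds by directly analysing the vanishing conditions $\braket{(\vec\theta,\vec\varphi)\to\vec o\,|\,\psi}=0$ and constructing a global assignment avoiding all of them, not by isolating a surviving product term. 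Similarly, the balance and equatorial reductions require the explicit amplitude computation (the analogue of Lemma~\ref{beta-snl} for general $\vec\theta$), from which one reads off that vanishing forces $\theta_i=\piovertwo$ and balanced coefficients; your ``variety too thin'' heuristic gestures at this but supplies no mechanism. In short: right road map, but the proof is the part you have labelled as the obstacle.
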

A \textbf{balanced} state has the form $\Bstate = \frac{1}{\sqrt{2}}(\ket{v_{\vec\lambda}} + e^{i \Phi} \ket{w_{\vec\lambda}})$,
where $\vec\lambda = (\lambda_1, \lambda_2, \lambda_3)$ with $\lambda_j \in \piovertwointerval$, $\Phi \in [0,2\pi )$, and $\ket{v_{\vec\lambda}} = \bigotimes_{i=1}^3 \ket{v_{\lambda_i}}$, $\ket{w_{\vec\lambda}} = \bigotimes_{i=1}^3 \ket{w_{\lambda_i}}$ with \begin{equation}
    \ket{v_{\lambda_j}} := \cos{\frac{\lambda_j}{2}} \ket{0} + \sin{\frac{\lambda_j}{2}} \ket{1}, \quad \ket{w_{\lambda_j}} := \sin{\frac{\lambda_j}{2}} \ket{0} + \cos{\frac{\lambda_j}{2}} \ket{1}.
\end{equation}
\textbf{Equatorial measurements} are of the form $E_\varphi := \cos\left(\varphi\right) X + \sin\left(\varphi\right) Y$ for $\varphi \in [0,2\pi)$. Since the only measurements that contribute to strong nonlocality are equatorial ones, we only need to consider measurement scenarios $\M = (M_1,M_2,M_3)$ where each set $M_i$ consists of measurement angles $\varphi \in [0, 2\pi)$. The following remark yields further simplifications.

\begin{remark}\label{rem:measurement}
    As discussed in Abramsky et al.\ \cite[Section 2.3]{Abramsky2017}, it suffices to only consider global assignments $g: \bigsqcup_{i=1}^n M_i \rightarrow \mathcal{O}$ with the property that, for any local equatorial measurement $E_\varphi$ on any qubit,
    \begin{equation}\label{eqn:nice_ga}
        g(\varphi) = g(\varphi + \pi) \oplus 1.
    \end{equation}
    That is, to show $\varepsilon(\ket\psi,\M)$ is strongly nonlocal, it suffices to demonstrate inconsistency of only those global assignments satisfying Eq.\ \ref{eqn:nice_ga}. Conversely, if  $\varepsilon(\ket\psi,\M)$ is not strongly nonlocal then there is a consistent global assignment that also satisfies Eq.\ \ref{eqn:nice_ga}. Thus, when constructing measurement scenarios, we can consider each set $M_i$ as only containing angles in the interval $[0, \pi)$.
\end{remark}

Letting $\vec\varphi = (\varphi_1, \varphi_2, \varphi_3)$, define $\ket{\vec\varphi} = \bigotimes_{i=1}^3 \ket{\varphi_i}$. We want to understand when the amplitude\linebreak $\braket{\vec\varphi|\BstateNoKet}$ is $0$, since this corresponds to the event $\vec\varphi \rightarrow (0,\ldots,0)$ being impossible.

\begin{lemma}[{\cite{Abramsky2017}}] \label{beta-snl}
    Let $\beta: \piovertwointerval \times [0, 2\pi) \rightarrow \R$ be defined as follows:
    \begin{equation}\label{eqn:beta}
        \beta(\lambda, \varphi) = \varphi - 2\arctan \left( \frac{\cos{\frac{\lambda}{2}} \sin{\varphi}}{\sin{\frac{\lambda}{2}} + \cos{\frac{\lambda}{2}} \cos{\varphi}} \right).
    \end{equation}
    Then \begin{equation}
    \braket{\vec\varphi|\BstateNoKet} = 0 \iff \sum_{i=1}^3\beta(\lambda_i, \varphi_i) \equiv \pi - \Phi \modtwopi.\label{eqn:imposs_}
\end{equation}
\end{lemma}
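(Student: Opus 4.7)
The plan is to expand the amplitude $\braket{\vec\varphi|\BstateNoKet}$ factor-by-factor and reduce the vanishing condition to a sum of phases modulo $2\pi$. Setting $u_j := \braket{\varphi_j|v_{\lambda_j}}$ and $w_j := \braket{\varphi_j|w_{\lambda_j}}$, the amplitude factorises as
\[
\braket{\vec\varphi|\BstateNoKet} = \tfrac{1}{\sqrt 2}\bigl(u_1 u_2 u_3 + e^{i\Phi} w_1 w_2 w_3\bigr).
\]
First I would compute the single-qubit moduli from the definitions and verify $|u_j|^2 = |w_j|^2 = \tfrac{1}{2}(1 + \sin\lambda_j\cos\varphi_j)$. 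Since $\lambda_j \in \piovertwointerval$ forces $\sin\lambda_j < 1$, this quantity is strictly positive, so $w_j \neq 0$ and each ratio $u_j/w_j = e^{i\gamma_j}$ is a unit complex number. The amplitude therefore vanishes iff $\prod_j e^{i\gamma_j} = -e^{i\Phi}$, i.e.\ iff $\sum_j \gamma_j \equiv \pi + \Phi \pmod{2\pi}$.

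The main step is to identify $\gamma_j$ with $-\beta(\lambda_j,\varphi_j)$. Because $|w_j|^2 > 0$, one has $\arg(u_j/w_j) = \arg(u_j \overline{w_j})$, and a direct expansion of the product yields the perfect-square identity
\[
e^{i\varphi_j}\, u_j\, \overline{w_j} \;=\; \tfrac{1}{2}\Bigl(\sin\tfrac{\lambda_j}{2} + e^{i\varphi_j}\cos\tfrac{\lambda_j}{2}\Bigr)^{\!2}.
\]
Taking arguments, $\varphi_j + \gamma_j \equiv 2\alpha_j \pmod{2\pi}$, where $\alpha_j := \arg\bigl(\sin\tfrac{\lambda_j}{2} + e^{i\varphi_j}\cos\tfrac{\lambda_j}{2}\bigr)$ satisfies $\tan\alpha_j = \tfrac{\cos(\lambda_j/2)\sin\varphi_j}{\sin(\lambda_j/2)+\cos(\lambda_j/2)\cos\varphi_j}$. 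Crucially, any $\pi$-shift coming from the principal branch of $\arctan$ is absorbed by the doubling and hence is invisible modulo $2\pi$, so $\gamma_j \equiv 2\arctan\!\bigl(\tfrac{\cos(\lambda_j/2)\sin\varphi_j}{\sin(\lambda_j/2)+\cos(\lambda_j/2)\cos\varphi_j}\bigr) - \varphi_j \equiv -\beta(\lambda_j,\varphi_j) \pmod{2\pi}$.

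Combining the two steps, the vanishing condition becomes $-\sum_j \beta(\lambda_j,\varphi_j) \equiv \pi + \Phi \pmod{2\pi}$, which rearranges to $\sum_j \beta(\lambda_j,\varphi_j) \equiv -\pi - \Phi \equiv \pi - \Phi \pmod{2\pi}$, as required. I expect the main obstacle to be spotting the perfect-square identity; its polar form compresses what would otherwise be a messy polar decomposition of $u_j\overline{w_j}$ into a single line, but its verification requires carefully manipulating the double-angle formulas. After that, the rest is bookkeeping modulo $2\pi$, with the branch subtlety of $\arctan$ handled by the doubling.
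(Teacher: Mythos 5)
Your proof is correct. The paper itself omits the proof of this lemma, citing Abramsky et al.\ and calling it straightforward, so there is no in-paper argument to compare against; your route --- factorising the amplitude as $\tfrac{1}{\sqrt 2}(u_1u_2u_3 + e^{i\Phi}w_1w_2w_3)$, checking $|u_j|=|w_j|\neq 0$ for $\lambda_j\in[0,\tfrac{\pi}{2})$, and reducing to a phase condition --- is the natural one. The key identity $e^{i\varphi_j}\,u_j\,\overline{w_j} = \tfrac12\bigl(\sin\tfrac{\lambda_j}{2}+e^{i\varphi_j}\cos\tfrac{\lambda_j}{2}\bigr)^2$ checks out by direct expansion, and your observation that the doubling absorbs the $\pi$-branch ambiguity of $\arctan$ correctly disposes of the only delicate point, so the derivation is complete.
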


For the rest of the paper, the equivalence symbol $\equiv$ is used to denote when two quantities are equal modulo $2\pi$, unless otherwise specified.  We note that the formula for the function $\beta$ differs slightly from that of \cite[Section~5.3]{Abramsky2017}. The proof is straightforward and is therefore omitted. 
\begin{lemma}\label{lem:beta-facts}
    The following properties of the function $\beta$ can be easily verified. 
    \begin{enumerate}
        \item Modulo $2\pi$, for all fixed $\lambda \in \piovertwointerval$, $\beta(\lambda, \varphi)$ is strictly decreasing as a function of $\varphi$ on $(0, 2\pi)$ and is thus bijective on $[0, 2\pi)$.
        \item $\beta(0, \varphi) \equiv -\varphi $. 
        \item For all $\lambda \in \piovertwointerval$, $\beta(\lambda, \varphi) \equiv 0$ if and only if $\varphi \equiv 0 $, and $\beta(\lambda, \varphi) \equiv \pi $ if and only if $\varphi \equiv \pi $.
        \item $\beta(\lambda, \piovertwo) \equiv \lambda - \piovertwo $.
    \end{enumerate}
\end{lemma}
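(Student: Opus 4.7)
My plan is to handle items 2--4 by direct substitution into Eq.~\ref{eqn:beta} and to prove item 1 via a derivative computation plus a periodicity check. Throughout, it will be conceptually helpful to note that the $\arctan$ term in Eq.~\ref{eqn:beta} equals $\Arg(z) \pmod{2\pi}$, where $z(\lambda,\varphi) := \sin(\lambda/2) + \cos(\lambda/2)\,e^{i\varphi}$, so that $\beta(\lambda,\varphi) \equiv \varphi - 2\Arg(z)$. This reformulation sidesteps branch-cut ambiguities of $\arctan$ at $\varphi = \pi$, where the ostensible discrepancy between $\arctan(0)=0$ and $\Arg < 0 \text{ real}) = \pi$ disappears after doubling and reducing mod $2\pi$.

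For item 1, I differentiate under the $\Arg$ formulation, using $\partial_\varphi \Arg(z) = \operatorname{Im}((dz/d\varphi)/z)$ with $dz/d\varphi = i\cos(\lambda/2)e^{i\varphi}$ and $|z|^2 = 1 + \sin\lambda\cos\varphi$. A short trigonometric simplification (expanding $2\cos(\lambda/2)(\cos(\lambda/2) + \sin(\lambda/2)\cos\varphi) = 1 + \cos\lambda + \sin\lambda\cos\varphi$ and cancelling) yields
\begin{equation*}
\frac{\partial \beta}{\partial \varphi} \;=\; \frac{-\cos\lambda}{1 + \sin\lambda\cos\varphi}.
\end{equation*}
For $\lambda \in \piovertwointerval$ we have $\cos\lambda > 0$ and $|\sin\lambda| < 1$, so the denominator is strictly positive and the derivative is strictly negative on all of $(0, 2\pi)$. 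To upgrade strict monotonicity to a bijection modulo $2\pi$, I invoke the standard integral $\int_0^{2\pi}(1+a\cos\varphi)^{-1}\,d\varphi = 2\pi/\sqrt{1-a^2}$ with $a = \sin\lambda$, which gives a total decrease of exactly $2\pi$ over one period; hence $\beta(\lambda, \cdot)$ hits each residue in $[0, 2\pi)$ exactly once.

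Items 2--4 then follow by direct evaluation of Eq.~\ref{eqn:beta}. At $\lambda = 0$ the $\arctan$ argument simplifies to $\tan\varphi$, and $2\arctan(\tan\varphi) \equiv 2\varphi \pmod{2\pi}$ gives item 2. At $\varphi \in \{0,\pi\}$ the numerator of the $\arctan$ vanishes, giving $\beta(\lambda,0) \equiv 0$ and $\beta(\lambda,\pi) \equiv \pi$; the converse (``only if'') implications use the bijectivity established in item 1. At $\varphi = \piovertwo$ one gets $\arctan(\cot(\lambda/2)) = \piovertwo - \lambda/2$ (interpreted as the limiting value $\piovertwo$ at $\lambda = 0$), so $\beta(\lambda,\piovertwo) \equiv \lambda - \piovertwo$. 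The only step requiring real thought is the derivative simplification in item 1, which depends on the clean cancellation above; everything else is routine bookkeeping, with the $\Arg$ viewpoint ensuring the modular arithmetic is unambiguous.
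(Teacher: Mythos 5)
Your proof is correct. The paper itself offers no argument for this lemma --- it is stated as a collection of facts that ``can be easily verified'' --- so there is no proof of record to compare against; your write-up simply supplies the omitted verification, and it is sound. The key computation checks out: with $z = \sin(\lambda/2) + \cos(\lambda/2)e^{i\varphi}$ one has $|z|^2 = 1 + \sin\lambda\cos\varphi$ and $\operatorname{Im}\bigl(i\cos(\lambda/2)e^{i\varphi}\bar z\bigr) = \tfrac12(1+\cos\lambda+\sin\lambda\cos\varphi)$, which indeed gives $\partial\beta/\partial\varphi = -\cos\lambda/(1+\sin\lambda\cos\varphi) < 0$ for $\lambda \in \piovertwointerval$; and the integral $\int_0^{2\pi}(1+\sin\lambda\cos\varphi)^{-1}\,d\varphi = 2\pi/\cos\lambda$ shows the total decrease over one period is exactly $2\pi$, which is precisely what is needed to upgrade injectivity to bijectivity modulo $2\pi$. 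The $\Arg$ reformulation is a genuinely useful device here, since the literal $\arctan$ expression jumps by $\pi$ where the denominator changes sign, and only after doubling and reducing mod $2\pi$ does $\beta$ become continuous; your framing makes that point cleanly, and it also correctly handles $\beta(\lambda,\pi)\equiv\pi$ and the $\lambda\to 0$ limit in item 4. The only blemish is the garbled clause ``$\Arg < 0 \text{ real}) = \pi$'' (presumably meant to read that the argument of a negative real number is $\pi$), which is a typographical slip rather than a mathematical gap.
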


To avoid redundancy, we consider quantum scenarios $(\ket{\psi}, \M)$, $(\ket{\psi}', \M')$ to be \textbf{equivalent} if $U\ket{\psi} = \ket{\psi'}$ and $M \in M_i$ if and only if $UMU^\dag\in M_{P(i)}'$; here $P$ is some permutation of qubit labels and we have used shorthand notation to denote the rotation of an equatorial measurement, mod $\pi$, by some local unitary $U$.  Note that if $\vec\lambda$ has at least one zero entry, without loss of generality $\lambda_1 = 0$, then $(\Bstate, \M)$ is equivalent to $(\Bstate[0], \M')$ via the local unitary $\diag(1, e^{-i\Phi})\otimes I\otimes I$. 

For the rest of the paper, we mostly focus on a specific subclass of balanced states, defined as follows: 
\begin{definition}\label{def:twolambdaszero}
    An \textbf{interpolant state} $\Bsimple$ is a balanced state of the form $\ket{\text{B}((0,0,\lambda),0)}$.
\end{definition}
All examples of paradoxes found in \cite{Abramsky2017} and in Section \ref{section:4.new_families} of this paper involve interpolant states.  These states interpolate between $\ket{\text{B}(0)} = \ket{\text{GHZ}}$ and $\lim_{\lambda \to \piovertwo} \ket{\text{B}(\lambda)} = \ket{\text{Bell}} \otimes \ket{+}$.

\section{Technical preliminaries}\label{section:3}

In this section, we collect novel theoretical developments required for our main results.

Throughout this paper, we will assume that all paradoxes fulfil the following definition.
\begin{definition}
    A paradox $(\ket{\psi}, \M)$ is \textbf{minimal} if $(\ket{\psi}, \M')$ is not a paradox for any $\M'$ containing strictly fewer measurements than $\M$, i.e.\ $M_i' \subseteq M_i$ for all $i$ and at least one containment is strict.
\end{definition}

\subsection{Impossible events in quantum tripartite models}
\label{section:3.1}

We define an additional function,
\begin{equation}
    \delta(\lambda, \varphi) := \beta(\lambda, \varphi + \pi) - \beta(\lambda, \varphi),
\end{equation}
which tells us how $\beta$ changes when we flip a measurement outcome on one qubit. Note that
\begin{equation}
    \delta(\lambda, \varphi) \equiv \pi - 2\arctan(\sin\varphi\tan\lambda) ,
\end{equation}
for which we use the fact that $\arctan u + \arctan v = \arctan\left(\frac{u+v}{1-uv}\right) + k\pi$ for some $k \in\{-1, 0, 1\}$.

\begin{lemma}\label{lem:delta-facts}~
    Let $\lambda \in \piovertwointerval$ and $\varphi \in [0, \pi)$. Then one can easily check the following properties of $\delta$.
    \begin{enumerate}
        \item  $\delta(\lambda, \varphi) \in (0, \pi] $.

        \item $\delta(\lambda, \varphi) \equiv \pi $ if and only if $\sin\varphi \tan\lambda = 0 \iff \lambda = 0 \text{ or } \varphi = 0$.

        \item $\delta(\lambda_1, \varphi_1) \pm \delta(\lambda_2, \varphi_2) \equiv 0  \iff \sin\varphi_1\tan\lambda_1 = \mp\sin\varphi_2\tan\lambda_2$.
        Further, by Lemma \ref{lem:delta-facts}.1, 
        \begin{equation}
            \delta(\lambda_1, \varphi_1) + \delta(\lambda_2, \varphi_2) \equiv 0  \iff \delta(\lambda_1, \varphi_1) = \delta(\lambda_2, \varphi_2) \equiv \pi.
        \end{equation}
    \end{enumerate}
\end{lemma}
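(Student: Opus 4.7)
The plan is to work directly from the closed-form expression $\delta(\lambda,\varphi) \equiv \pi - 2\arctan(\sin\varphi\tan\lambda)$ already derived in the excerpt, exploiting the sign constraint $\sin\varphi\tan\lambda \geq 0$ on the given domains to keep the arctan confined to a narrow interval and avoid any subtle mod-$2\pi$ ambiguities.

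First I would establish point 1. For $\lambda \in \piovertwointerval$ and $\varphi \in [0, \pi)$ we have $\tan\lambda \geq 0$ and $\sin\varphi \geq 0$, so $\sin\varphi \tan\lambda \in [0, \infty)$ and hence $\arctan(\sin\varphi\tan\lambda) \in \left[0, \piovertwo\right)$. Substituting into the closed form gives a representative of $\delta(\lambda,\varphi)$ lying in $(0,\pi]$; since this interval has length less than $2\pi$, that value is the canonical one mod $2\pi$. Point 2 then drops out of the same range computation: the condition $\delta \equiv \pi$ reduces to $\arctan(\sin\varphi\tan\lambda) \equiv 0 \pmod{\pi}$, and the range $\left[0, \piovertwo\right)$ forces this arctan to equal $0$ exactly, so by injectivity $\sin\varphi\tan\lambda = 0$, which on the given intervals factors as $\varphi = 0$ or $\lambda = 0$.

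For point 3, I would substitute the closed form into $\delta_1 \pm \delta_2 \equiv 0$ and divide through by $-2$ (which takes congruence mod $2\pi$ to congruence mod $\pi$), reducing the claim to
\[
\arctan(\sin\varphi_1 \tan\lambda_1) \pm \arctan(\sin\varphi_2 \tan\lambda_2) \equiv 0 \pmod{\pi}.
\]
Since each arctan term lies in $\left[0, \piovertwo\right)$, the sum lies in $[0, \pi)$ and the difference in $\left(-\piovertwo, \piovertwo\right)$; in either case the congruence modulo $\pi$ forces equality to $0$, and strict monotonicity of $\arctan$ then identifies the two arguments up to the indicated sign, giving $\sin\varphi_1 \tan\lambda_1 = \mp \sin\varphi_2 \tan\lambda_2$. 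The ``further'' remark is immediate from point 1: $\delta_1, \delta_2 \in (0, \pi]$ gives $\delta_1 + \delta_2 \in (0, 2\pi]$, so the only representative of $0 \pmod{2\pi}$ available to this sum is $2\pi$, which forces $\delta_1 = \delta_2 = \pi$.

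The main obstacle here is little more than interval bookkeeping: the trigonometric manipulations themselves are elementary, and the only mildly subtle point is tracking which of the candidates $0, \pm\pi, \pm 2\pi$ actually lies in the range of each expression so that the mod-$2\pi$ equivalences can be upgraded to genuine equalities before applying injectivity of $\arctan$.
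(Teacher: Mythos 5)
Your proof is correct and follows exactly the route the paper intends: the lemma is stated without proof ("one can easily check"), and the closed form $\delta(\lambda,\varphi)\equiv\pi-2\arctan(\sin\varphi\tan\lambda)$ is derived immediately beforehand precisely so that all three facts reduce to the interval bookkeeping you carry out. Your explicit handling of the mod-$2\pi$ subtlety (that the closed form only gives a representative, confined to $(0,\pi]$ by the sign constraint $\sin\varphi\tan\lambda\geq 0$) is the one point that genuinely needs care, and you get it right.
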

Suppose that, considering measurement of the context $(A,B,C)$ on a state, the event $ABC \rightarrow abc$ is impossible.  By Lemma \ref{lem:delta-facts}.1, $\delta(\lambda, \varphi) \not\equiv 0\ \forall \lambda,\varphi$, so any other impossible event of the same context must have a tuple of outcomes whose Hamming distance from $abc$ is at least $2$. Therefore, the number of impossible events associated with any context is at most $4$.  

The following lemma is a consequence of Lemmas \ref{lem:delta-facts}.2 and \ref{lem:delta-facts}.3.

\begin{lemma}\label{lem:impossible-events}
    Fix a context $(A,B,C)$.
    \begin{enumerate}
        \item For a given $c \in \Z_2$, if one of the events $ABC \rightarrow 00c$, $ABC \rightarrow 11c$ is impossible, then the other event is impossible if and only if $\delta(\lambda_1, A) \equiv \delta(\lambda_2, B) \equiv \pi $.
        \item For a given $c \in \Z_2$, if one of the events $ABC \rightarrow 01c$, $ABC \rightarrow 10c$ is impossible, then the other event is impossible if and only $\delta(\lambda_1, A) \equiv \delta(\lambda_2, B)$.
    \end{enumerate}
    Similar results hold by permuting the qubits.
\end{lemma}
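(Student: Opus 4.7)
The plan is to deduce both statements directly from Lemma \ref{beta-snl} by using that flipping the outcome of an equatorial measurement corresponds to shifting its angle by $\pi$. Indeed, the $-1$ eigenstate of $E_\varphi$ coincides with the $+1$ eigenstate of $E_{\varphi+\pi}$, so the event $ABC \to o_1 o_2 o_3$ is impossible if and only if, by Lemma \ref{beta-snl},
\[
\beta(\lambda_1, A + o_1\pi) + \beta(\lambda_2, B + o_2\pi) + \beta(\lambda_3, C + o_3\pi) \equiv \pi - \Phi.
\]

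For part 1, I would write this impossibility equation for both $ABC \to 00c$ and $ABC \to 11c$ and subtract them. The third-qubit contribution $\beta(\lambda_3, C + c\pi)$ appears in both with the same argument and cancels, leaving
\[
\delta(\lambda_1, A) + \delta(\lambda_2, B) \equiv 0.
\]
Since one of the two events is already impossible by assumption, the other is impossible if and only if this sum equation holds; by the second statement of Lemma \ref{lem:delta-facts}.3, this is equivalent to $\delta(\lambda_1, A) \equiv \delta(\lambda_2, B) \equiv \pi$, as required. For part 2, the analogous subtraction of the impossibility equations for $ABC \to 01c$ and $ABC \to 10c$ produces $\delta(\lambda_1, A) - \delta(\lambda_2, B) \equiv 0$, i.e.\ $\delta(\lambda_1, A) \equiv \delta(\lambda_2, B)$. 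The reverse directions in both parts are immediate: adding the relevant zero-equivalence back to the impossibility equation for one event yields the impossibility equation for the other. The statements for arbitrary qubit permutations then follow by identical algebra after relabelling indices.

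The argument is essentially a bookkeeping exercise on the algebra already set up; the only real hazard is the modular arithmetic. Lemma \ref{lem:delta-facts} handles this cleanly: its first part pins $\delta$ to $(0,\pi]$, so the equivalence of two $\delta$-values mod $2\pi$ upgrades to a genuine equality whenever needed, and its third part directly packages the sum-equivalent-to-zero condition into the simultaneous-$\pi$ condition required for part 1 of the present lemma.
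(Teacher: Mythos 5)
Your proposal is correct and follows exactly the route the paper intends: the paper omits the proof, stating only that the lemma "is a consequence of Lemmas \ref{lem:delta-facts}.2 and \ref{lem:delta-facts}.3," and your computation (subtracting the two $\beta$-sum impossibility conditions so the third-qubit term cancels, then invoking Lemma \ref{lem:delta-facts}.3 to convert $\delta(\lambda_1,A)+\delta(\lambda_2,B)\equiv 0$ into the simultaneous-$\pi$ condition) is precisely the omitted argument. No gaps.
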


\subsection{Logical structure of three-qubit nonlocality paradoxes using interpolant states}
\label{section:3.2}
The family of three-qubit paradoxes introduced in \cite{Abramsky2017} differed from that of the GHZ state in that they involved not a single unsatisfiable system of linear equations over $\Z_2$ but two such systems.  These two systems were conditioned on the outcome of the third-qubit measurement and found to be inconsistent in both cases.  Here, we show that this conditional structure is a feature of every paradox involving an interpolant state $\Bsimple$.  We express this structure in a convenient form that we use in the remainder of the paper.

First we note that, by Lemma \ref{lem:beta-facts}.2, Eq.\ \ref{eqn:imposs_} has a much simpler form for interpolant states. Given a context $(A_j, B_k, C_l)$, the outcome $(a_j,b_k,c_l)$ is \emph{impossible} if and only if
\begin{equation}\label{eqn:imposs_events_interpolant}
    A_j + B_k \equiv \beta(\lambda, C_l) + c_l\delta(\lambda, C_l) + (1 \oplus a_j \oplus b_k) \pi,
\end{equation}
where $\oplus$ denotes addition modulo $2$. Note that $\beta(\lambda, C_l) + c_l\delta(\lambda, C_l) = \beta(\lambda, C_l + c_l \pi)$. By rearranging Eq.\ \ref{eqn:imposs_events_interpolant}, we see that a necessary condition for an event $A_jB_kC_l \rightarrow a_jb_kc_l$ to be impossible is that $A_j + B_k - \beta(\lambda, C_l) - c_l\delta(\lambda, C_l)$ must be a multiple of $\pi$.

Define $r_{jkl}(z) := \pi^{-1}[(A_j+B_k) - \beta(\lambda, C_l + z\pi)] \mod 2 \in [0,2)$.  By Lemma~\ref{beta-snl} and Lemma~\ref{lem:beta-facts}.2, the outcome $(a_j,b_k,c_l)$ is \emph{possible} if and only if \begin{equation}
     r_{jkl}(c_l) \notin \Z_2 \quad \text{or} \quad (a_j\oplus b_k) = r_{jkl}(c_l).
\end{equation}   

For $z \in \Z_2$, define the set of $\Z_2$-linear equations:\begin{equation}\label{eqn:Z2-linear-def}
    \Psi_{l}(z) := \{a_j \oplus b_k = r_{jkl}(z)\mid \forall j,k \text{ such that } r_{jkl}(z) \in \Z_2\}.
\end{equation}
A global assignment $g: \bigsqcup M_i \to \mathcal{O}$ is consistent with an empirical model of $(\Bsimple, \M)$ if and only if, for every $l$, $\Psi_{l}(z)$ is satisfied by $a_j = g(A_j),\ b_k = g(B_k),\ c_l = g(C_l)$.  We have thus shown the following.
\begin{lemma}\label{lem:inconsistent}
    The quantum scenario $(\Bsimple, \M)$, with $M_3 = \{C_0, \dots, C_{n-1}\}$, is a paradox if and only if for every $\vec z \in \Z_2^n$, the system of $\Z_2$-linear equations $\Psi(\vec z) := \cup_l \Psi_{l}(z_l)$ is inconsistent.
\end{lemma}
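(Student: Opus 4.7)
The plan is to translate the definition of strong nonlocality directly into the language of the systems $\Psi_l(z)$, using the characterization of possible events for interpolant states derived in the paragraph preceding the lemma.

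First I would invoke Remark~\ref{rem:measurement} to identify an admissible global assignment $g$ with a triple $(\vec a, \vec b, \vec c) \in \Z_2^{|M_1|} \times \Z_2^{|M_2|} \times \Z_2^n$ via $a_j := g(A_j)$, $b_k := g(B_k)$, $c_l := g(C_l)$. Because the values of $g$ on angles in $M_i + \pi$ are forced by Eq.~\ref{eqn:nice_ga}, quantifying over admissible global assignments is the same as quantifying over such triples. In particular, the choices of $\vec c$ and $(\vec a, \vec b)$ are independent.

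Next I would apply the characterization just preceding the lemma: for the interpolant state, the event $(A_j, B_k, C_l) \to (a_j, b_k, c_l)$ is possible iff $r_{jkl}(c_l) \notin \Z_2$ or $a_j \oplus b_k = r_{jkl}(c_l)$. Inspecting the definition of $\Psi_l(z)$ in Eq.~\ref{eqn:Z2-linear-def}, this says exactly that whenever the equation $a_j \oplus b_k = r_{jkl}(c_l)$ is included in $\Psi_l(c_l)$, it is satisfied by the pair $(a_j, b_k)$. Running this equivalence over all triples $(j,k,l)$, the assignment $g$ yields only possible events across all contexts iff $(\vec a, \vec b)$ satisfies every equation in $\Psi(\vec c) = \bigcup_l \Psi_l(c_l)$.

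Finally, I would conclude by combining the two steps: $\varepsilon(\Bsimple, \M)$ fails to be strongly nonlocal iff some admissible $g$ yields only possible events, iff there exist $\vec z \in \Z_2^n$ (playing the role of $\vec c$) and $(\vec a, \vec b)$ satisfying $\Psi(\vec z)$, i.e.\ iff $\Psi(\vec z)$ is consistent for at least one $\vec z$. Taking contrapositives will deliver the lemma. The argument is essentially an unpacking of definitions; I do not anticipate any genuine obstacle, since all substantive content was deployed in establishing the characterization of possible events via $r_{jkl}$ and in defining $\Psi_l(z)$. The only delicate point is ensuring that the existential quantifiers separate cleanly into independent choices of $\vec z$ and $(\vec a, \vec b)$, which is guaranteed by Remark~\ref{rem:measurement}.
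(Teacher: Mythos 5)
Your proposal is correct and follows essentially the same route as the paper, which establishes the lemma by exactly this unpacking: the characterization of possible events via $r_{jkl}$, the observation that a global assignment is consistent iff each $\Psi_l(z_l)$ is satisfied, and the separation of the quantifier over $\vec z$ from the choice of $(\vec a,\vec b)$ justified by Remark~\ref{rem:measurement}. No gaps.
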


\subsection{Maximally impossible quantum scenarios}
\label{section:3.3}
Intuitively, an empirical model with more impossible events has a higher likelihood of witnessing a paradox compared to one with fewer impossible events. This motivates us to focus on empirical models in which the number of impossible events is maximised. We formalise this idea in the following definition.

\begin{definition}\label{def:max-imposs_}
    Let $(\Bsimple, \M )$ be a quantum scenario involving an interpolant state. Let $C\in M_3$, $z\in \Z_2$ and define: \begin{align}
        \A_z (C) &:= \{ A \in M_1 \mid \exists B \in M_2,\ a,b\in \Z_2: ABC \rightarrow abz \text{ is impossible} \} \label{eqn:max_imposs} \\
        \B_z (C) &:= \{ B \in M_2 \mid \exists A \in M_1,\ a,b\in \Z_2 : ABC \rightarrow abz \text{ is impossible} \}.
    \end{align}
    Also define $\A (C) := \A_0 (C) \cap \A_1 (C)$ and $\B (C) :=\B_0 (C) \cap \B_1 (C)$.
    Then $(\Bsimple , \M)$ 
    is \textbf{maximally impossible} if for all $C\in M_3$, we have $\A (C) = M_1$ and $\B (C) = M_2$.  
\end{definition}

The definition of $\A (C)$ can be reformulated as follows: $A\in \A (C)$ if and only if $\exists B_1, B_2 \in M_2$ such that [$A+B_1 \equiv \beta(\lambda, C)$ or $\beta(\lambda, C)+ \pi$] and [$A+B_2 \equiv \beta(\lambda, C+\pi)$ or $ \beta(\lambda, C+\pi)+ \pi$]. The definition of $\B (C)$ can be reformulated in a similar way.

Now we prove some properties of maximally impossible quantum scenarios. First, observe that, by Eq.\ \ref{eqn:imposs_}, if $A \in \A_z(C)$, then there is exactly one measurement $B$ that satisfies the required property in Eq.\ \ref{eqn:max_imposs}, and likewise for measurements in $\B_z(C)$. This justifies our nomenclature.

\begin{lemma}\label{lem:M1=M2}
    Let $(\Bsimple, \M)$ be maximally impossible. Then $|M_1| = |M_2|$.
\end{lemma}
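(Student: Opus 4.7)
The plan is to exploit the rigid constraint that Eq.\ \ref{eqn:imposs_events_interpolant} places on the measurements $A, B, C$ and outcomes $a, b, c$ when the event $ABC \to abc$ is impossible. For an interpolant state, fixing any three of these five quantities pins down the remaining two (mod $\pi$), which should allow me to construct a bijection between $\A_z(C)$ and $\B_z(C)$ for each $C \in M_3$ and $z \in \Z_2$. The desired equality of cardinalities then follows from the maximally impossible hypothesis.

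First I would observe the following uniqueness property: for a fixed $A \in M_1$, $C \in M_3$, and $z \in \Z_2$, there is at most one $B \in M_2$ such that $ABC \to abz$ is impossible for some $a, b \in \Z_2$. Indeed, Eq.\ \ref{eqn:imposs_events_interpolant} forces $B \equiv \beta(\lambda, C + z\pi) + (1 \oplus a \oplus b)\pi - A$; varying $a \oplus b$ produces two candidates differing by $\pi$, and by Remark \ref{rem:measurement} at most one lies in $M_2 \subseteq [0, \pi)$. This yields a well-defined function $f_{C,z} \colon \A_z(C) \to \B_z(C)$ sending $A$ to this unique $B$.

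Next I would show $f_{C,z}$ is a bijection. For injectivity, if $A, A'$ both map to the same $B$, subtracting the two instances of Eq.\ \ref{eqn:imposs_events_interpolant} gives $A - A' \equiv 0 \pmod{\pi}$, which by Remark \ref{rem:measurement} forces $A = A'$. For surjectivity, any $B \in \B_z(C)$ by definition has a witness $A \in M_1$ with some impossible event $ABC \to abz$, and this $A$ lies in $\A_z(C)$ with $f_{C,z}(A) = B$. Hence $|\A_z(C)| = |\B_z(C)|$ for each choice of $C$ and $z$.

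Finally, I would apply the maximally impossible hypothesis. Since $\A_z(C) \subseteq M_1$ and $\A(C) = \A_0(C) \cap \A_1(C) = M_1$ for every $C \in M_3$, both $\A_0(C)$ and $\A_1(C)$ must equal $M_1$; analogously $\B_0(C) = \B_1(C) = M_2$. Picking any $C \in M_3$ (which exists since otherwise $\M$ would not witness a paradox) gives $|M_1| = |\A_0(C)| = |\B_0(C)| = |M_2|$. The only subtlety I anticipate is the $\pi$-ambiguity in inverting Eq.\ \ref{eqn:imposs_events_interpolant}; but this is precisely handled by the convention that $M_i \subseteq [0,\pi)$ established in Remark \ref{rem:measurement}, so the argument goes through cleanly without any harder analysis.
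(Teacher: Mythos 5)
Your proposal is correct and follows essentially the same route as the paper: both exploit the fact that Eq.\ \ref{eqn:imposs_events_interpolant} determines the partner measurement $B$ uniquely (up to the $\pi$-ambiguity resolved by $M_2 \subseteq [0,\pi)$), and both use maximal impossibility to make the resulting map total on $M_1$. The only cosmetic difference is that you prove one map is a bijection via injectivity plus surjectivity, whereas the paper exhibits two injections $M_1 \to M_2$ and $M_2 \to M_1$.
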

\begin{proof}
    Let $C\in M_3$ and fix an outcome $z \in \Z_2$. We define the function $f_1 \colon M_1 \to M_2$ as follows: $f_1(A) = B$, where $B$ is such that $ABC \rightarrow abz$ is an impossible event for some choice of $a, b\in \Z_2$. Since $(\Bsimple, \M)$ is maximally impossible, $f_1$ is well-defined and injective. By symmetry, we can define a function $f_2 \colon M_2 \to M_1$ such that $f_2$ is injective. This implies $|M_1| = |M_2|$. 
\end{proof}

\begin{remark}\label{rem:better-notation}
    Thus, given a maximally impossible quantum scenario $(\Bsimple, \M)$ with $|M_1| = |M_2| = N$ and $|M_3| = n$, there exists a (unique) function $K : \Z_N \times \Z_n \times \Z_2 \rightarrow \Z_N$ such that for all $(j, l, z) \in \Z_N \times \Z_n \times \Z_2$ there exists an impossible event $A_jB_{K(j, l, z)}C_l \rightarrow abz$ for some $a,b \in \Z_2$. Furthermore, $K(-,l,z)$ gives a bijection $\Z_N \leftrightarrow \Z_N$ for all $(l,z) \in \Z_n \times \Z_2$. We can then write the $\Z_2$-linear equations of Eq.\ \ref{eqn:Z2-linear-def} using the following notation, which emphasises the dependencies among the indices:
    \begin{equation}\label{eqn:Z2-linear-max-imposs}
        \Psi_{l}(z) = \{a_j \oplus b_{K(j,l,z)} = r(j,l,z)\mid j \in \Z_N\}.
    \end{equation}
\end{remark}

\begin{remark}\label{rem:Z2_rank}
    Suppose in particular that $M_3 = \{C_0, C_1\}$ (i.e.\ $n=2$).
    Let $(z_0, z_1) \in \Z_2^2$ and consider the linear system $\Psi(z_0, z_1) = \Psi_0 (z_0) \cup \Psi_1 (z_1)$. By the discussion above, each of the variables $a_j, b_k$ for $j,k \in \Z_N$ appears in exactly two of the $2N$ equations in $\Psi(z_0, z_1)$, so the sum of the LHS of the system is $0$. Write the system in matrix form, $\Gamma \vec a = \vec r$, where $\vec a = (a_0, \ldots, a_{N-1}, b_0, \ldots, b_{N-1})^T$ and $\vec r = \bigl({r(0,0,z_0)},\ \ldots,\ {r(N-1,0,z_0)},\ {r(0,1,z_1)},\ \ldots,\ {r(N-1,1,z_1)}\bigr)^T$. Then the kernel of the coefficient matrix $\Gamma$ is nontrivial and hence the rank of $\Gamma$ is at most $2N - 1$. The system is inconsistent if and only if the augmented matrix $(\Gamma \mid \vec r)$ has a higher rank than $\Gamma$.
\end{remark}

\begin{lemma}\label{lem:delta_beta_restrict}~
    Let $(\Bsimple, \M)$ be a maximally impossible quantum scenario. Let $|M_1| = |M_2|=N$ and $M_3=\{C_0, \dots, C_{n-1}\}$. Then $\delta (\lambda, C_l)$ and $\beta (\lambda, C_{l_1}) - \beta (\lambda, C_{l_2})$ are integer multiples of $\pioverN$ for all $l,l_1, l_2 \in \Z_n$.
\end{lemma}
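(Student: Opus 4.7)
The plan is to exploit the maximally impossible condition together with the bijectivity of the index function $K(-,l,z) : \Z_N \to \Z_N$ from Remark~\ref{rem:better-notation}. For each fixed pair $(l,z) \in \Z_n \times \Z_2$ and each $j \in \Z_N$, there is by hypothesis an impossible event $A_j B_{K(j,l,z)} C_l \to a_j b_{K(j,l,z)} z$. Using the simplified form of the impossibility condition for interpolant states (Eq.\ \ref{eqn:imposs_events_interpolant}), this translates to
\begin{equation}
    A_j + B_{K(j,l,z)} \equiv \beta(\lambda, C_l + z\pi) + s(j,l,z)\, \pi,
\end{equation}
for some $s(j,l,z) \in \Z_2$, where I have absorbed $z\delta(\lambda,C_l)$ into $\beta(\lambda, C_l + z\pi)$ as noted after Eq.\ \ref{eqn:imposs_events_interpolant}.

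The key step is to sum this equation over $j \in \Z_N$. Because $K(-,l,z)$ is a bijection of $\Z_N$, the sum $\sum_j B_{K(j,l,z)}$ collapses to $\sum_k B_k$, so the left-hand side equals $S := \sum_j A_j + \sum_k B_k$, which is a single quantity independent of $(l,z)$. The right-hand side becomes $N\beta(\lambda, C_l + z\pi) + T(l,z)\,\pi$ for some integer $T(l,z)$. Thus
\begin{equation}
    S \equiv N\,\beta(\lambda, C_l + z\pi) + T(l,z)\,\pi \quad \text{for all } (l,z) \in \Z_n \times \Z_2.
\end{equation}

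Comparing two instances of this congruence then yields the lemma. Taking the difference of the $(l,0)$ and $(l,1)$ instances cancels $S$ and gives $N\,\delta(\lambda, C_l) \equiv (T(l,0)-T(l,1))\pi \pmod{2\pi}$, so $\delta(\lambda, C_l)$ is an integer multiple of $\pi/N$. Similarly, taking the difference of the $(l_1,0)$ and $(l_2,0)$ instances gives $N(\beta(\lambda,C_{l_1}) - \beta(\lambda,C_{l_2})) \equiv (T(l_1,0)-T(l_2,0))\pi \pmod{2\pi}$, establishing the second claim.

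I do not anticipate a real obstacle: the argument is essentially a counting/summing trick, and all ingredients (bijectivity of $K(-,l,z)$, the impossibility equation for interpolant states, the definition of $\delta$) are already in place. The only subtlety to handle carefully is the bookkeeping modulo $2\pi$ versus modulo $\pi$: each individual equation is stated modulo $2\pi$, but after summing we only extract information up to multiples of $\pi$, which is exactly what the statement requires.
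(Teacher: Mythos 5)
Your proof is correct and follows essentially the same route as the paper: sum the impossibility equations over $j$ for each $(l,z)$, use the bijectivity of $K(-,l,z)$ to cancel the $B$-sums, and compare the resulting congruences to extract $N\delta(\lambda,C_l)\equiv 0$ or $\pi$ and $N(\beta(\lambda,C_{l_1})-\beta(\lambda,C_{l_2}))\equiv 0$ or $\pi$. Your packaging via the single $(l,z)$-independent quantity $S$ is a slightly cleaner presentation of the identical argument.
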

\begin{proof}~
    Let $C_l \in M_3$ and $K$ be the function as defined in Remark \ref{rem:better-notation}. For every $A_j \in M_1$ and outcome $c_l = z \in \Z_2$, Eq.\ \ref{eqn:imposs_events_interpolant} is satisfied by $k = K(j,l,z)$. Therefore
    \begin{alignat}{2}
        A_j+B_{K(j,l,0)} &\equiv \beta(\lambda, C_l) + (1 \oplus a_j\oplus b_{K(j,l,0)})\pi \quad &&\text{for } j \in \Z_N \label{eq:eq1}  \\
        \text{and} \quad A_j+B_{K(j,l,1)} &\equiv \beta(\lambda, C_l) + \delta(\lambda, C_l) + (1 \oplus a_j\oplus b_{K(j,l,1)})\pi \quad &&\text{for } j \in \Z_N. \label{eq:eq2}
    \end{alignat}
    If we sum the $N$ equations from Eq.\ \ref{eq:eq1} and separately sum the $N$ equations from Eq.\ \ref{eq:eq2}, then subtract the former from the latter, we get $N \delta (\lambda, C_l) \equiv 0 \text{ or } \pi $. Note that the sum of the $B_{K(j,l,0)}$'s and the sum of the $B_{K(j,l,1)}$'s cancel out. So, $\delta (\lambda, C_l) \equiv s\pioverN $ for some $s\in \{1,\dots, N\}$. 
    
    Let $C_{l_1}, C_{l_2} \in M_3$. When $c_{l_1} = c_{l_2} = 0$, the maximally impossible property implies the following:  \begin{align}
        A_j+B_{K(j,l_1,0)} &\equiv \beta(\lambda, C_{l_1}) + (1 \oplus a_j\oplus b_{K(j,l_1,0)})\pi \quad \text{for } j \in \Z_N   \\
        \text{and} \quad A_j+B_{K(j,l_2,0)} &\equiv \beta(\lambda, C_{l_2}) + (1 \oplus a_j\oplus b_{K(j,l_2,0)})\pi \quad \text{for } j \in \Z_N. 
        \end{align}
    Similarly to above, if we add all the equations on the first line and subtract the equations on the second line, we get $N (\beta (\lambda, C_{l_1}) - \beta (\lambda, C_{l_2})) \equiv 0 \text{ or } \pi $. Therefore, $\beta (\lambda, C_{l_1}) - \beta (\lambda, C_{l_2}) \equiv t\pioverN $ for some $t\in \{0,\dots, N-1\}$.
\end{proof}

\section{Main results}
\label{section:4}
In this section, we introduce several families of paradoxes, including a natural extension of \cite[Theorem~8]{Abramsky2017}. We provide graphical visualisations of these paradoxes and establish that they are pairwise distinct up to equivalence. Next, we classify all $N$-regular (see Definition~\ref{def:Nregular}) paradoxes by showing that the previously constructed paradoxes exhaust all such cases. Additionally, we present a paradox that is not $N$-regular, suggesting the existence of exotic paradoxes with novel structures. Finally, we propose a conjecture that paradoxes must involve {\twolambdaszero} states, and provide supporting evidence.

\subsection{New families of paradoxes}
\label{section:4.new_families}
We first focus on maximally impossible quantum scenarios $(\Bsimple, \M)$ such that there are two measurements on the third qubit. Referring to Remark \ref{rem:Z2_rank}, we introduce an additional assumption that, for all $z_0, z_1 \in \Z_2$, the coefficient matrix $\Gamma$ of the corresponding linear system $\Psi(z_0, z_1)$ achieves its maximum possible rank of $2N - 1$. We shall use the term \textbf{maximal rank} to describe the quantum scenarios satisfying this assumption.

\begin{definition}\label{def:Nregular}
    The quantum scenario $(\Bsimple, \M)$ is \textbf{\regular} if it is maximally impossible and maximal rank, with $|M_1| = |M_2| = N$ and $M_3 = \{C_0, C_1 \}$. 
\end{definition}

\begin{lemma}\label{lem:opposite_parities}
    Let $(\Bsimple, \M)$ be {\regular}. For any $z_0, z_1 \in \Z_2$, the linear system $\Psi(z_0, z_1)$ is inconsistent if and only if 
    \begin{equation}
        \bigoplus_{j = 0}^{N-1} r(j,0,z_0) \neq \bigoplus_{j = 0}^{N-1} r(j,1,z_1).
    \end{equation}
\end{lemma}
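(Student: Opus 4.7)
The plan is to reduce the claim to a standard fact from linear algebra over $\Z_2$: a linear system $\Gamma \vec a = \vec r$ is consistent if and only if $\vec r$ is orthogonal (with respect to the bilinear form over $\Z_2$) to every vector in the left kernel of $\Gamma$.

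First I would recall the structure of $\Gamma$ from Remark \ref{rem:Z2_rank}. It is a $2N \times 2N$ matrix over $\Z_2$, where each variable $a_j$ or $b_k$ appears in exactly two equations of $\Psi(z_0,z_1)$: once coming from an equation in $\Psi_0(z_0)$ and once from an equation in $\Psi_1(z_1)$ (this uses that $K(-,l,z)$ is a bijection $\Z_N \leftrightarrow \Z_N$ for each $(l,z)$, as noted in Remark \ref{rem:better-notation}). Consequently, summing all $2N$ rows of $\Gamma$ yields the zero vector, so the all-ones vector $\vec 1 \in \Z_2^{2N}$ lies in the left kernel of $\Gamma$.

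Next I would invoke the $N$-regularity hypothesis: the maximal rank assumption says $\rank(\Gamma) = 2N - 1$. By rank-nullity, the left kernel of $\Gamma$ is one-dimensional; combined with the previous step, it must be spanned by $\vec 1$. Therefore $\Psi(z_0, z_1)$ is inconsistent if and only if $\vec 1 \cdot \vec r \neq 0$ in $\Z_2$, i.e.\ the sum of all entries of $\vec r$ is $1$. Since
\begin{equation}
    \vec 1 \cdot \vec r = \bigoplus_{j=0}^{N-1} r(j,0,z_0) \,\oplus\, \bigoplus_{j=0}^{N-1} r(j,1,z_1),
\end{equation}
this is exactly the condition $\bigoplus_{j} r(j,0,z_0) \neq \bigoplus_{j} r(j,1,z_1)$, giving the claim.

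There is no real obstacle here; the only subtle point is making sure the left kernel is genuinely one-dimensional, which is where the maximal rank part of $N$-regularity is used. The rest is standard linear algebra over $\Z_2$ and a direct computation of the sum of the right-hand side, so I would be brief about those steps.
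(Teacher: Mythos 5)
Your proof is correct and is essentially the same argument as the paper's: the paper's one-line proof (inconsistency iff the augmented matrix has rank $2N$, iff the sum of the entries of $\vec r$ is $1$) is exactly the compressed form of your left-kernel computation, resting on the same two facts from Remark~\ref{rem:Z2_rank} and the maximal rank hypothesis. Your version just makes explicit why the unique row dependency is the all-ones vector.
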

\begin{proof}
    Since the quantum scenario is maximal rank, $\Psi(z_0, z_1)$ is inconsistent if and only if the rank of the corresponding augmented matrix is $2N$, which holds if and only if the sum of all the entries of $\vec r$ (the RHS of the matrix equation as defined in Remark \ref{rem:Z2_rank}) is $1$.
\end{proof}
\begin{remark}\label{rem:slight-generalisation}
    The `if' direction of Lemma \ref{lem:opposite_parities} holds even if we remove the property of maximal rank.
\end{remark}

\begin{corollary}\label{cor:opposite_parities}
    Let $(\Bsimple, \M)$ be an {\regular} paradox.
    \begin{enumerate}
        \item For any $z_0, z_1 \in \Z_2$, we have $\bigoplus_j r(j,0,z_0) \neq \bigoplus_j r(j,1,z_1)$.
        \item For each $l \in \Z_2$,  $\bigoplus_j r(j,l,0) = \bigoplus_j r(j,l,1)$.
    \end{enumerate}
\end{corollary}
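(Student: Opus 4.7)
The proof follows almost directly from Lemma \ref{lem:inconsistent} and Lemma \ref{lem:opposite_parities}, exploiting the fact that parities take values in the two-element set $\Z_2$.

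For part 1, the plan is to combine the definition of a paradox with the inconsistency characterisation. Since $(\Bsimple, \M)$ is an $N$-regular paradox, in particular $n = |M_3| = 2$, so by Lemma \ref{lem:inconsistent} the system $\Psi(z_0, z_1) = \Psi_0(z_0)\cup \Psi_1(z_1)$ is inconsistent for every $(z_0, z_1) \in \Z_2^2$. Applying Lemma \ref{lem:opposite_parities}, which is available because the scenario is $N$-regular, gives the desired inequality $\bigoplus_j r(j,0,z_0) \neq \bigoplus_j r(j,1,z_1)$ for every such pair.

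For part 2, the idea is to use part 1 four times and observe that in $\Z_2$ "different from the same element" forces equality. Fix, say, $l = 0$. Taking $(z_0, z_1) = (0,0)$ in part 1 gives $\bigoplus_j r(j,0,0) \neq \bigoplus_j r(j,1,0)$; taking $(z_0, z_1) = (1,0)$ gives $\bigoplus_j r(j,0,1) \neq \bigoplus_j r(j,1,0)$. Since parities live in $\Z_2$, these two inequalities immediately imply $\bigoplus_j r(j,0,0) = \bigoplus_j r(j,0,1)$, which is the $l = 0$ case. The $l = 1$ case is symmetric, obtained by fixing $z_0$ to the same value in the pairs $(z_0, 0)$ and $(z_0, 1)$ and comparing against $\bigoplus_j r(j,0,z_0)$.

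There is no real obstacle here; the entire argument is a bookkeeping step that simply rewrites the paradox condition (stated as pairwise inequalities of four parities) as a statement that the two parities indexed by $z \in \Z_2$ agree within each fixed $l$. The only thing to be careful about is that Lemma \ref{lem:opposite_parities} requires both the maximally impossible and maximal rank hypotheses, both of which are built into the definition of $N$-regular, so we are entitled to invoke it.
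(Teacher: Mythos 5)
Your argument is correct and is exactly the derivation the paper intends (the corollary is stated without proof as an immediate consequence of Lemma \ref{lem:inconsistent} and Lemma \ref{lem:opposite_parities}). The $\Z_2$ pigeonhole step for part 2 is the right bookkeeping, and your note that $N$-regularity supplies both hypotheses of Lemma \ref{lem:opposite_parities} is accurate.
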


The following lemma establishes necessary and sufficient conditions on the third-qubit measurements for an {\regular} quantum scenario $(\Bsimple, \M)$ to be a paradox.  

\begin{lemma}\label{lem:even_odd_multiples}
    Let $(\Bsimple, \M)$ be an {\regular} quantum scenario. It is a paradox if and only if $\delta(\lambda, C_0)$ and $\delta(\lambda, C_1)$ are \emph{even} multiples of $\pioverN$ and $\beta(\lambda, C_0) - \beta(\lambda, C_1)$ is an \emph{odd} multiple of $\pioverN$.
\end{lemma}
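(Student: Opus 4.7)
The plan is to invoke Lemma~\ref{lem:opposite_parities} (together with Corollary~\ref{cor:opposite_parities}) to translate the paradox condition into a set of parity conditions on the sums $\bigoplus_{j} r(j,l,z)$, and then convert those parity conditions into the desired divisibility conditions on $\delta(\lambda,C_l)$ and $\beta(\lambda,C_0)-\beta(\lambda,C_1)$ via a single summation identity.

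The key step will be to sum the impossibility equation of Eq.~\ref{eqn:imposs_events_interpolant} over $j \in \Z_N$ for each fixed $(l,z)$. Since $(\Bsimple,\M)$ is maximally impossible, for every $(j,l,z)$ the event $A_jB_{K(j,l,z)}C_l\to a_j b_{K(j,l,z)} z$ is impossible for a unique pair of outcomes, which forces $r(j,l,z)\in\Z_2$ and
\[
\pi\,r(j,l,z) \equiv A_j + B_{K(j,l,z)} - \beta(\lambda, C_l) - z\,\delta(\lambda, C_l) \pmod{2\pi}.
\]
Summing over $j$ and using that $K(\cdot,l,z)\colon\Z_N\to\Z_N$ is a bijection, the $B$-terms reassemble into $\sum_k B_k$. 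Setting $S := \sum_j A_j + \sum_k B_k$, which is independent of $l$ and $z$, this yields the identity
\[
\pi \sum_{j=0}^{N-1} r(j,l,z) \equiv S - N\beta(\lambda, C_l) - zN\delta(\lambda, C_l) \pmod{2\pi}.
\]
Since $r(j,l,z)\in\{0,1\}$, the parity $\bigoplus_j r(j,l,z)$ equals $0$ or $1$ precisely when the left-hand side is $\equiv 0$ or $\equiv \pi \pmod{2\pi}$, respectively.

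Given this identity, both directions fall out by comparing pairs. For fixed $l$, subtracting the $z=0$ and $z=1$ instances gives $\pi\!\left[\sum_j r(j,l,0)-\sum_j r(j,l,1)\right] \equiv N\delta(\lambda,C_l) \pmod{2\pi}$, so $\bigoplus_j r(j,l,0)=\bigoplus_j r(j,l,1)$ holds if and only if $N\delta(\lambda,C_l)\equiv 0\pmod{2\pi}$, i.e.\ $\delta(\lambda,C_l)$ is an even multiple of $\pioverN$. Likewise, for fixed $z$, subtracting the $l=0$ and $l=1$ instances gives $\pi\!\left[\sum_j r(j,0,z)-\sum_j r(j,1,z)\right] \equiv N(\beta(\lambda,C_1)-\beta(\lambda,C_0)) \pmod{2\pi}$, so the two parities differ if and only if $N(\beta(\lambda,C_0)-\beta(\lambda,C_1))\equiv \pi\pmod{2\pi}$, i.e.\ $\beta(\lambda,C_0)-\beta(\lambda,C_1)$ is an odd multiple of $\pioverN$. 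By Lemma~\ref{lem:opposite_parities} applied to all four pairs $(z_0,z_1)\in\Z_2^2$, these are exactly the conditions equivalent to inconsistency of $\Psi(z_0,z_1)$ for every $(z_0,z_1)$, which by Lemma~\ref{lem:inconsistent} is equivalent to $(\Bsimple,\M)$ being a paradox.

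The only real obstacle will be the initial bookkeeping: writing down the correct mod-$2\pi$ identification between $\pi\sum_j r(j,l,z)$ and the expression $S - N\beta(\lambda,C_l) - zN\delta(\lambda,C_l)$, and carefully invoking bijectivity of $K(\cdot,l,z)$ to reabsorb the $B$-contributions. After that, the four parity cases of Corollary~\ref{cor:opposite_parities} map one-to-one onto the two divisibility conditions claimed in the lemma, so no further estimates or case analysis are needed.
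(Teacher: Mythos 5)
Your proposal is correct and takes essentially the same route as the paper: the paper likewise sums the impossibility equations of Eq.~\ref{eqn:imposs_events_interpolant} over $j\in\Z_N$, uses the bijectivity of $K(-,l,z)$ to cancel the $B$-terms, and converts the parity conditions of Corollary~\ref{cor:opposite_parities} into the divisibility conditions on $N\delta(\lambda,C_l)$ and $N(\beta(\lambda,C_0)-\beta(\lambda,C_1))$, with your summation identity being a clean repackaging of that computation. One small bookkeeping point: your fixed-$z$ subtraction actually carries a cross term $zN\bigl[\delta(\lambda,C_1)-\delta(\lambda,C_0)\bigr]$ unless $z=0$ or the even-multiple conditions on $\delta$ are already in force, but since in each direction of your argument those conditions are established (or assumed) first, this is harmless.
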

\begin{proof}
    Suppose that $(\Bsimple, \M)$ is a paradox. We follow the proof of Lemma \ref{lem:delta_beta_restrict}, but in this case we can use Corollary \ref{cor:opposite_parities} to deduce further restrictions. Let $l \in \Z_2$. After summing the $N$ equations of Eq.\ \ref{eq:eq1} and then subtracting the $N$ equations of Eq.\ \ref{eq:eq2}, the coefficient of $\pi$ on the RHS is
    \begin{equation}
        \bigoplus_{j=0}^{N-1} \left(a_j \oplus b_{K(j,l,0)}\right) \oplus \bigoplus_{j=0}^{N-1} \left(a_j \oplus b_{K(j,l,1)}\right) = \bigoplus_{j = 0}^{N-1} (1 \oplus r(j,l,0)) \oplus \bigoplus_{j = 0}^{N-1} (1 \oplus r(j,l,1)) = 0,
    \end{equation}
    where the last equality is by Corollary \ref{cor:opposite_parities}.2. Thus, we in fact have $N\delta(\lambda, C_l) \equiv 0 $ for each $l \in \Z_2$. Similarly, we can use Corollary \ref{cor:opposite_parities}.1 to deduce that $N(\beta(\lambda, C_0) - \beta(\lambda, C_1)) \equiv \pi $. Dividing each of these equations by $N$ gives us the desired result.

    Conversely, we can reverse the argument above to deduce that, for any $z_0, z_1 \in \Z_2$, the linear system $\Psi(z_0, z_1)$ satisfies $\bigoplus_j r(j,0,z_0) \oplus \bigoplus_j r(j,1,z_1) = 1$. By Lemma \ref{lem:opposite_parities}, $\Psi(z_0, z_1)$ is inconsistent, and Lemma~\ref{lem:inconsistent} then implies that $(\Bsimple, \M)$ is a paradox.
\end{proof}

Therefore, an {\regular} quantum scenario $(\Bsimple, \M)$ is a paradox if and only if the third-qubit measurements satisfy certain conditions. Below we list multiple families of quantum scenarios that satisfy these conditions and in the next subsection, we prove that these exhaust all {\regular} paradoxes. 

\begin{theorem}\label{thm:new_paradoxes}
    Let $N \in \N$, $N \geq 2$, and let $M_3 = \{C_0, C_1\}$. Then, $(\Bsimple, \M)$ is a paradox whenever the parameters $\lambda, C_0, C_1, M_1, M_2$ satisfy one of the following conditions:
    \begin{enumerate}[label=(\alph*)]
    \item $N=2$: The parameters describe the standard GHZ paradox.
    
    \item $N >2$, $N$ is even: There exist $s,t \in \Z$, with $s$ even and $t$ odd, satisfying $1 \leq s < N$ and $1 \leq t \leq s-1$, such that \begin{equation}\label{eqn:case_b}
        C_0 = 0, \quad \delta(\lambda, C_1) \equiv \pioverN s , \quad \beta(\lambda, C_1) \equiv -\pioverN t.
    \end{equation}
    Additionally, $M_1 = M_2 = \pioverN \Z_N$.
    In particular, if $t = s/2$, then $\lambda = \piovertwo - \pioverN t$ and $C_1 = \piovertwo$.
    
    \item There exist $s,t' \in \Z$, with $s$ even, satisfying $1 \leq s < N$ and $0 \leq t' \leq s/2-1$, such that
    \begin{equation}
        \begin{gathered}
            C_1 = \pi - C_0,\quad \delta(\lambda, C_0) \equiv \delta(\lambda, C_1) \equiv \pioverN s , \\ 
            \beta(\lambda, C_0) \equiv -\pioverN(t' + \tfrac{1}{2}) ,\quad \beta(\lambda, C_1) \equiv -\pioverN(s - t' - \tfrac{1}{2}).
        \end{gathered}
    \end{equation}
    Additionally, $M_1 = \pioverN \Z_N$, and $M_2 = \pioverN(\Z_N + \frac{1}{2})$.
    
    \item There exist $s,s',t' \in \Z$, with $s,s'$ even and $t'$ odd, satisfying $1 \leq s < s' < N$ and $1 \leq |t'| \leq N-1$, such that
    \begin{gather}
        \delta(\lambda, C_0) \equiv \pioverN s ,\quad \delta(\lambda, C_1) \equiv \pioverN s' ,\quad \beta(\lambda, C_1) - \beta(\lambda, C_0) \equiv \pioverN t'. \label{eqn:icky3}
    \end{gather}
    Additionally, let $\beta(\lambda, C_0) \equiv -\pioverN t$ ($0 < t < s$) and $\nu := \lceil t \rceil - t$. Then $M_1 = \pioverN \Z_N$, and $M_2 = \pioverN(\Z_N + \nu)$.
    \end{enumerate}
\end{theorem}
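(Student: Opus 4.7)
The plan is to reduce each of the four cases to a verification of two properties: maximal impossibility of $(\Bsimple, \M)$ and the ``even/odd multiples of $\pi/N$'' parity conditions from Lemma~\ref{lem:even_odd_multiples}. Given these, the argument in the converse direction of Lemma~\ref{lem:even_odd_multiples} applies, with Remark~\ref{rem:slight-generalisation} ensuring that the maximal-rank hypothesis of Lemma~\ref{lem:opposite_parities} is unnecessary for the ``if'' direction used there. This yields inconsistency of $\Psi(\vec z)$ for every $(z_0, z_1) \in \Z_2^2$ and hence a paradox via Lemma~\ref{lem:inconsistent}.

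The parity conditions reduce to immediate arithmetic. In (a), $\lambda = 0$ gives $\delta \equiv \pi = 2(\pi/2)$ by Lemma~\ref{lem:delta-facts}.2 (an even multiple of $\pi/N$ for $N = 2$), and the standard GHZ choice $(C_0, C_1) = (0, \pi/2)$ gives $\beta_1 - \beta_0 \equiv -\pi/2$, an odd multiple of $\pi/2$. In (b), $C_0 = 0$ forces $\beta_0 \equiv 0$ and $\delta_0 \equiv \pi$ via Lemmas~\ref{lem:beta-facts}.3 and~\ref{lem:delta-facts}.2 (even since $N$ is even), while the parities of $s$ and $t$ control $\delta_1$ and $\beta_1 - \beta_0$. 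In (c), $\beta_1 - \beta_0 \equiv -(s - 2t' - 1)\pi/N$ is an odd multiple since $s$ is even and $2t' + 1$ is odd. Case (d) is immediate from Eq.~(\ref{eqn:icky3}).

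Maximal impossibility is established using Eq.~(\ref{eqn:imposs_events_interpolant}): for each $(A, C_l, z) \in M_1 \times M_3 \times \Z_2$, an impossible event $ABC_l \to abz$ exists iff there is some $B \in M_2$ with $A + B \equiv \beta(\lambda, C_l) + z\delta(\lambda, C_l) \pmod \pi$. In each case the right-hand side is an integer multiple of $\pi/N$ up to a coset shift (none in (a), (b); by $1/2$ in (c); by $\nu = \lceil t \rceil - t$ in (d)), so the map $A \mapsto -A + \beta(\lambda, C_l) + z\delta(\lambda, C_l) \pmod \pi$ is a bijection from $M_1 = (\pi/N)\Z_N$ onto the $M_2$ prescribed in each case; the symmetric condition defining $\B(C_l)$ follows by inverting the bijection.

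The main technical obstacle will be organising the modular arithmetic in case (d), where $t$ is generally non-integer and $\nu$ is engineered to match its fractional part, so coset shifts must be tracked carefully throughout both the maximal-impossibility verification and the explicit computation of $r(j, l, z)$. One must additionally verify that a single $\lambda \in [0, \pi/2)$ can simultaneously realise the three constraints of Eq.~(\ref{eqn:icky3}) on $\delta(\lambda, C_0), \delta(\lambda, C_1), \beta(\lambda, C_1) - \beta(\lambda, C_0)$; this relies on the strict monotonicity of $\beta$ (Lemma~\ref{lem:beta-facts}.1) together with the interval bounds on $s, s', t'$ to guarantee that $\lambda$ (and hence $C_0, C_1 \in [0, \pi)$) exist in their permitted ranges.
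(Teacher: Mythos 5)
Your proposal is correct and follows essentially the same route as the paper: verify maximal impossibility and the even/odd multiple conditions of Lemma~\ref{lem:even_odd_multiples}, then invoke the parity argument via Remark~\ref{rem:slight-generalisation} (so that maximal rank is not needed) and conclude with Lemma~\ref{lem:inconsistent}. The paper's own proof is a two-sentence sketch of exactly this; your expansion of the maximal-impossibility check into an explicit coset-matching bijection $A \mapsto \beta(\lambda,C_l)+z\delta(\lambda,C_l)-A$ is a faithful filling-in of that sketch, and your closing concern about the existence of admissible $(\lambda, C_0, C_1)$ in cases (b)--(d) is precisely the point the paper defers to Theorem~\ref{thm:partial_classification}.
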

\begin{proof}
    In each case, the quantum scenario $(\Bsimple, \M)$ is maximally impossible and $|M_1| = |M_2| = N$. Moreover, the conditions on $\delta$ and $\beta$ from Lemma~\ref{lem:even_odd_multiples} are satisfied, so for any $z_0, z_1 \in \Z_2$, the linear system $\Psi(z_0, z_1)$ satisfies $\bigoplus_j r(j,0,z_0) \oplus \bigoplus_j r(j,1,z_1) = 1$. Therefore, by Remark \ref{rem:slight-generalisation}, $(\Bsimple, \M)$ is a paradox. 
\end{proof}

Except for case (a) and the specific instance of case (b) where $t = s/2$, the existence of parameters $\lambda, C_0 , C_1$ satisfying the conditions (b)--(d) of the above theorem is nontrivial. The existence of such parameters is established in Theorem \ref{thm:partial_classification} in the following subsection.

These paradoxes can be visualised by picturing the measurement angles of $M_1, M_2$ and the values of $\beta(\lambda, C_l)$ for $l\in\Z_2$ as points distributed around a circle. For example, Figure \ref{fig:diagram1} visualises two different paradoxes satisfying condition (b) with $N=8$, $C_1 = \piovertwo$ and $t=s/2=1$, $3$. The measurements in $M_1$ and $M_2$ are evenly spread modulo $\pi$, with each connected by a dotted line to its diametric opposite, corresponding to flipping the outcome. In the third circle, each pair of values $\beta, \beta + \delta$, which are joined by dotted line segments of the same colour, is symmetric under reflection about the $x$-axis. For different $s$ and $t$, the spacing between these pairs and their $\beta$ values varies but remains an integer multiple of $\pioverN $. 
See Figure~\ref{fig:diagram2} for illustrations of the paradoxes in the other cases.

\begin{figure}\vspace{-0.75cm}
    \centering
    \input{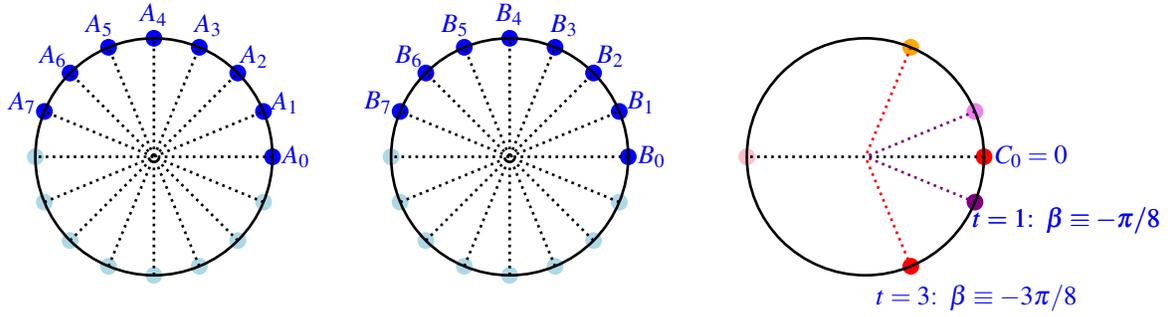}
    \vspace*{-1cm}\caption{Two paradoxes satisfying condition (b), where $N=8$, $C_1 = \piovertwo$, and $t=s/2$, for the cases $t = 1,3$.}
    \label{fig:diagram1}
\end{figure}

In case (b), when $t = s/2$, the paradoxes take the form $(\Bsimple<\lambda_{N,t}>, \M)$, where $\lambda_{N,t} = \piovertwo - \pioverN t$ for all even $N$ and odd $t$ satisfying $1 \leq t < N/2$. The third measurement set is $\{0,\frac{\pi}{2}\}$. This family is an extension of \cite[Theorem~8]{Abramsky2017}, which considered $\lambda_N = \piovertwo - \pioverN $ indexed by even $N\in \N$. Our construction, indexed by a tuple $(N,t)$, corresponds to a subset of the rationals and includes infinitely more paradoxes.

Remarkably, the other cases in Theorem \ref{thm:new_paradoxes} exhibit paradoxes from stranger empirical models. In each of these cases, although the resulting values of the functions $\delta$ and $\beta$ have relatively simple properties, the value of $\lambda$ and the measurement angles on the third qubit are complicated and appear to be very difficult to find analytically. It is as yet not even clear in these cases whether $\lambda \in \Q\pi$. Notably, the paradoxes in cases (c) and (d) do not involve the measurement angle $0$, i.e.\ the Pauli $X$, which is a distinguishing feature absent in all previously known examples.

\begin{theorem}\label{thm:inequivalent}
    Every paradox described in Theorem~\ref{thm:new_paradoxes} is distinct up to equivalence. 
\end{theorem}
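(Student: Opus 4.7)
The plan is to produce, for each paradox in Theorem~\ref{thm:new_paradoxes}, a collection of invariants under the local-unitary-plus-qubit-permutation equivalence sufficient to distinguish it from every other listed paradox. First, the integer $N=|M_1|=|M_2|$ is clearly equivalence-invariant, separating case~(a) from cases~(b)--(d). The parameter $\lambda$ is recovered from the reduced density matrices of the state: a direct computation shows $\rho_3$ has eigenvalues $\tfrac{1}{2}(1\pm\sin\lambda)$ while $\rho_1=\rho_2=I/2$, so for $\lambda>0$ qubit~$3$ is distinguished and any qubit permutation must fix it, making $\lambda$ a genuine invariant.

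Next, I would show that the multiset $\Delta := \{\delta(\lambda, C) : C\in M_3\} \subset (0,\pi]$ is an equivalence-invariant. The $\delta$-values can be read off from the combinatorial pattern of impossible events within a context via Lemma~\ref{lem:impossible-events}, using $\delta(0,\cdot)=\pi$ for the measurements on qubits~$1$ and~$2$; moreover $\delta(\lambda,\cdot)$ is symmetric under $C \mapsto \pi - C$, which I claim is the only nontrivial action on $M_3$ induced by the stabiliser of $\Bsimple$ within the local-unitary group. Case~(b) is then distinguished from (c) and (d) since $C_0=0$ forces $\pi\in\Delta$, whereas $s,s'<N$ keep both $\delta$-values strictly less than $\pi$ in (c) and (d). Case~(c) is distinguished from (d) in that its two $\delta$-values coincide, while those of (d) differ by construction ($s \neq s'$).

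Within each case, the defining equations of Theorem~\ref{thm:new_paradoxes} determine $(\lambda, C_0, C_1)$ from the parameter tuple, and injectivity of $\beta(\lambda, \cdot)$ (Lemma~\ref{lem:beta-facts}.1) then forces distinct tuples to yield distinct $\lambda$ (or distinct $M_3$ modulo the reflection $C \mapsto \pi - C$), hence inequivalent paradoxes. Matching up the degrees of freedom allowed by the symmetries on qubits~$1$ and~$2$, namely the simultaneous shifts $R_z(\alpha)\otimes R_z(-\alpha)\otimes I$ and the swap $(1\,2)$, with the constraints $M_1=\pioverN\Z_N$ and $M_2=\pioverN(\Z_N+\nu)$ from the theorem, will yield that the offset $\nu$ (which can differ among cases and tuples) provides a further rigid invariant to separate edge cases.

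The main obstacle I anticipate is rigorously verifying that the only nontrivial action on $M_3$ permitted by the equivalence is $C \mapsto \pi - C$. This reduces to computing the stabiliser of $\Bsimple$ within the local-unitary group and ruling out nontrivial rotations $R_z(\alpha)$ on qubit~$3$; I expect this to follow from a short calculation showing that $R_z(\alpha)\ket{v_\lambda}$ fails to lie in the span of $\ket{v_\lambda}$ and $\ket{w_\lambda}$ for $\alpha \not\equiv 0 \pmod{2\pi}$, thereby breaking the interpolant form.
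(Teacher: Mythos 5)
Your overall strategy is the same as the paper's: reduce equivalence to a local unitary that fixes the interpolant state, show that its action on the third-qubit measurement angles is at most the reflection $C \mapsto \pi - C$, and then conclude from the distinctness of the triples $(\lambda, C_0, C_1)$. Your entry point is actually cleaner in one respect: recovering $\lambda$ from the spectrum of $\rho_3$ (eigenvalues $\tfrac{1}{2}(1\pm\sin\lambda)$, with $\rho_1=\rho_2=I/2$) is a tidy LU-invariant argument that also justifies ignoring qubit permutations, a point the paper only asserts. The paper instead proves $\lambda=\lambda'$ by expanding $U\Bsimple<\lambda>$ in the computational basis and matching coefficients (Lemma~\ref{lem:equivalence2}), after first constraining $U$ qubit-by-qubit.

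However, the step you correctly identify as the main obstacle is where your sketch breaks. The calculation you propose --- showing $R_z(\alpha)\ket{v_\lambda}$ ``fails to lie in the span of $\ket{v_\lambda}$ and $\ket{w_\lambda}$'' --- is vacuous: for $\lambda\neq 0$ these two vectors are linearly independent and span all of $\mathbb{C}^2$. The condition you actually need is that the qubit-3 unitary send $\ket{v_\lambda}$ and $\ket{w_\lambda}$ to \emph{phase multiples of themselves} (with phases compatible with those picked up by $\ket{00}$ and $\ket{11}$ on the first two qubits); a nontrivial $\diag(1,e^{i\alpha})$ fails this because $\ket{v_\lambda}$ is not a $Z$-eigenvector. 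Moreover, the stabiliser of the state alone does not suffice: the Schmidt vectors of $\Bsimple$ on qubit~3 are $\ket{\pm}$, so the state is preserved by arbitrary rotations about the $X$-axis of the Bloch sphere, which do \emph{not} act on $M_3$ by $C\mapsto\pi-C$ in general. The missing ingredient is the paper's Lemma~\ref{lem:equivalence}: because $U$ must also carry each (at least two-element) set of equatorial measurements to equatorial measurements, each local factor is forced to be $X^aP_\theta$ up to phase; only then does state-preservation kill the residual $Z$-phase on qubit~3 and leave $\{I, X\}$, i.e.\ the reflection. Finally, your claimed ``rigid invariant'' $\nu$ is not one: the residual symmetry $P_{\theta}\otimes P_{-\theta}\otimes I$ shifts $M_1$ and $M_2$ in opposite directions and changes the offset between them arbitrarily (only sums $A+B$ are preserved). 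This does no harm, since the triples $(\lambda, C_0, C_1)$ already separate all cases, but it should be dropped rather than relied on.
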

The proof can be found in Appendix~\ref{AppendixB.1}.

\subsection{A partial classification of three-qubit nonlocality paradoxes}   
\label{section:4.classification}

We now derive a complete classification for all {\regular} paradoxes. If a paradox does not come from a maximal rank quantum scenario, then the associated linear systems may have a more complex structure that warrants further investigation. See Subsection \ref{section:4.complete_class} for further discussion.

Before presenting the main theorem, we first establish a few key lemmas. First, Lemma \ref{lem:first_two_qubit_mmts} is about the measurements on the first two qubits in any {\regular} paradox.

\begin{lemma}\label{lem:first_two_qubit_mmts}
    Let $(\Bsimple, \M)$ be an {\regular} quantum scenario. Then the $N$ measurement angles in both $M_1$ and $M_2$ are evenly spaced modulo $\pi$, with adjacent angles differing by $\pioverN $. Mathematically, we have (up to equivalence) $M_1 = \pioverN\Z_N$ and $M_2 = \pioverN\Z_N + \mu$ for some $\mu \in [0, \pi)$.
\end{lemma}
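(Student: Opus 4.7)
The plan is to extract from the maximally impossible condition a subgroup $G \leq \R/\pi\Z$ under which both $M_1$ and $M_2$ (mod $\pi$) are translation-invariant, and then use the maximal rank hypothesis to force $|G| = N$.

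Specifically, for each $j \in \Z_N$ and each $(l,z) \in \Z_2 \times \Z_2$, maximal impossibility plus Remark~\ref{rem:better-notation} gives a unique $k = K(j,l,z)$ with $A_j + B_k \equiv \beta(\lambda, C_l + z\pi) \pmod \pi$. Comparing pairs with $l$ fixed and $z$ varying, or with $z$ fixed and $l$ varying, the map $A_j + c \mapsto A_j + c'$ between corresponding $B$-values shows that $M_2 \pmod \pi$ is stable under translation by $\delta_0 := \delta(\lambda, C_0)$, by $\delta_1 := \delta(\lambda, C_1)$, and by $\beta(\lambda, C_1) - \beta(\lambda, C_0)$. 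By Lemma~\ref{lem:delta_beta_restrict} all three lie in $(\pi/N)\Z/\pi\Z \cong \Z_N$, so the subgroup $G$ they generate is a subgroup of $\Z_N$ of order dividing $N$, and $M_2 \pmod \pi$ is a disjoint union of $m := N/|G|$ cosets of $G$. Since the bijections $K(-,l,z)$ identify the difference sets of $M_1$ and $M_2$, the same holds for $M_1$.

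The main work is showing $m = 1$. The key observation is that $\beta_1 - \beta_0 \in G$ and $\delta_l \in G$, so the coset-permutation of $M_2$ induced by ``pair via $(l,z) = (0,0)$'' agrees with those induced by all three other choices of $(l,z)$. Consequently, for every $(z_0, z_1) \in \Z_2^2$ the $2N$ equations of $\Psi(z_0, z_1)$ decouple into $m$ independent blocks, one per coset-pair $(M_1^{(i)}, M_2^{(\sigma(i))})$. Each block has $2|G|$ equations in $2|G|$ variables, and by the cancellation argument in Remark~\ref{rem:Z2_rank} its rank is at most $2|G| - 1$. Hence the total rank of $\Gamma$ is at most $m(2|G| - 1) = 2N - m$, so the maximal rank hypothesis $\rank(\Gamma) = 2N - 1$ forces $m = 1$, that is, $|G| = N$. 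Then $M_2 \pmod \pi$ is a single coset $(\pi/N)\Z_N + \mu_2$ and likewise $M_1 \pmod \pi = (\pi/N)\Z_N + \mu_1$.

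To conclude, I apply the local unitary $\diag(1, e^{-i\mu_1}) \otimes \diag(1, e^{i\mu_1}) \otimes I$, which preserves $\Bsimple$ (the condition $\phi_1 + \phi_2 = 0$ keeps $\Phi = 0$) while shifting $M_1 \mapsto M_1 - \mu_1$ and $M_2 \mapsto M_2 + \mu_1$, yielding $M_1 = (\pi/N)\Z_N$ and $M_2 = (\pi/N)\Z_N + \mu$ with $\mu := \mu_1 + \mu_2 \pmod \pi$. The principal obstacle is the decoupling step: one must verify that across all four choices of $(z_0, z_1)$ the block structures induced by the two halves $\Psi_0(z_0)$ and $\Psi_1(z_1)$ respect a common coset partition, and that within each block the LHS of all $2|G|$ equations still sum to $0$ so the rank bound $2|G|-1$ indeed applies; both follow from the fact that the cross-shifts $\beta_0 - \beta_1, \delta_0, \delta_1$ lie in $G$ and so act trivially on $(\R/\pi\Z)/G$.
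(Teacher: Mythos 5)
Your proof is correct, and it reaches the same conclusion by a genuinely different implementation of the key step. The paper works only with the single pairing induced by $(l,z)=(0,0)$ versus $(1,0)$: for each $j$ it records the pair $\{B_{K(j,0,0)}, B_{K(j,1,0)}\}$, whose members differ by $\beta(\lambda,C_0)-\beta(\lambda,C_1) \in \pioverN\Z$ by Lemma~\ref{lem:delta_beta_restrict}; since each $B_k$ lies in exactly two such pairs, the pairing graph is $2$-regular, and the maximal rank property is invoked to assert it is a single cycle, whence all angles in $M_2$ differ by multiples of $\pioverN$ and even spacing follows from $|M_2|=N$. You instead use all three shifts $\delta_0,\delta_1,\beta_1-\beta_0$ to make $M_2$ invariant under the subgroup $G$ they generate, decompose $M_2$ into $m=N/|G|$ cosets, and extract $m=1$ from an explicit block rank bound $\rank\Gamma \le m(2|G|-1)=2N-m$. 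The two mechanisms are equivalent in substance (both say maximal rank forces the translation structure to act transitively on $M_2$), but your version has the merit of making precise the step the paper only asserts — namely \emph{why} maximal rank rules out a disconnected pairing — via the verifiable observation that each block's LHS sums to zero, exactly as in Remark~\ref{rem:Z2_rank}. The price is the extra coset machinery and the need to check that all four pairings respect a common block partition, which you correctly reduce to the cross-shifts lying in $G$. Your concluding normalisation by $\diag(1,e^{-i\mu_1})\otimes\diag(1,e^{i\mu_1})\otimes I$ is also fine: it fixes $\Bsimple$ because the phases cancel on $\ket{11}$, and it translates the two measurement sets as claimed.
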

\begin{proof}
    For each $j \in \Z_N$, consider the two equations
    \begin{align}
        (A_j + B_{K(j,l,0)}) &\equiv \beta(\lambda, C_l) + (1 \oplus a_j \oplus b_{K(j,l,0)})\pi \quad \text{for } l \in \Z_2,\label{eqn:imposs_2}
    \end{align}
    which determine the impossible events on the contexts $(A_j, B_{K(j,l,0)}, C_l)$.
    Subtracting the $l=1$ equation from the $l=0$ equation gives us $B_{K(j,0,0)} - B_{K(j,1,0)} \equiv \beta(\lambda, C_0) - \beta(\lambda, C_1) \mod{\pi}$, i.e.\ the measurements $B_{K(j,0,0)}$ and $B_{K(j,1,0)}$ differ by a multiple of $\pioverN $. For each $j$, Eq.\ \ref{eqn:imposs_2} gives us a pair of measurements $\{B_{K(j,0,0)}, B_{K(j,1,0)}\}$. Since $(\Bsimple, \M)$ is maximally impossible, each measurement appears in exactly two such pairs. By the maximal rank property, it is possible to order these pairs to form a chain $\{B_{k_0}, B_{k_1}\},\ \{B_{k_{1}}, B_{k_{2}}\}, \ \ldots,\ \{B_{k_{N-2}}, B_{k_{N-1}}\},\ \{B_{k_{N-1}}, B_{k_0}\}$ consisting of all $N$ measurements on the second qubit. By taking linear combinations of the difference equations associated with these pairs, we deduce that all second-qubit measurement angles differ by multiples of $\pioverN $. Since there are $N$ measurements, the result for $M_2$ follows. The same result holds for $M_1$ by symmetry.
\end{proof}

The next two lemmas are used to show that the four cases (a)--(d) in Theorem \ref{thm:new_paradoxes} exhaust all $N$-regular paradoxes. Note that, in cases (a) and (b), one of the third-qubit measurements is $X$ ($C_0 = 0$); then a necessary and sufficient condition for a paradox is that $\lambda$ and the second measurement $C_1$ be such that $\delta(\lambda, C_1)$ is an even multiple of $\pioverN $ and $\beta(\lambda, C_1)$ \emph{itself} is an odd multiple of $\pioverN $ (since $\beta(\lambda, C_0) = 0$). Next, assuming that neither third-qubit measurement is $X$, in case (c) we have that $\delta(\lambda, C_l)$ is the same for each $l \in \Z_2$. To find all paradoxes with this property, we show in Lemma \ref{lem:death_by_algebra} that, given $\pioverN s$, with $s$ even, and $\lambda \in \left[\piovertwo - \frac{\pi}{2N}s, \piovertwo\right)$, there are at most two different measurement angles $C_0, C_1$ that satisfy $\delta(\lambda, C_l) \equiv \pioverN s$, and furthermore $C_1 = \pi - C_0$. Then, by imposing the requirement that $\beta(\lambda, C_0) - \beta(\lambda, C_1)$ be an odd multiple of $\pioverN $, we can completely classify all the $N$-regular paradoxes with equal values of $\delta(\lambda, C_l)$.

\begin{lemma}
    For $\lambda \in \piovertwointerval$ and $C \in [0, \pi)$, the equations $\delta (\lambda, C) \equiv \pi$ and $\beta(\lambda, C) \equiv 0$ are satisfied simultaneously if and only if either $\lambda = 0$ or $C=0$.  
\end{lemma}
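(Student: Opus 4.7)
The plan is to reduce the statement to the explicit formulas for $\delta$ and $\beta$ already established in Lemmas~\ref{lem:delta-facts} and~\ref{lem:beta-facts}, so no new computation is really needed.

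For the forward direction, I would first apply Lemma~\ref{lem:delta-facts}.2 directly: $\delta(\lambda,C)\equiv\pi$ holds if and only if $\sin C\,\tan\lambda=0$. Since $\lambda\in \piovertwointerval$ forces $\tan\lambda=0\Leftrightarrow\lambda=0$, and $C\in[0,\pi)$ forces $\sin C=0\Leftrightarrow C=0$, this already yields the disjunction $\lambda=0$ or $C=0$ without needing to invoke the hypothesis $\beta(\lambda,C)\equiv 0$.

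For the backward direction, I would split into the two cases. If $C=0$, then Lemma~\ref{lem:beta-facts}.3 gives $\beta(\lambda,0)\equiv 0$ for every $\lambda\in\piovertwointerval$, and Lemma~\ref{lem:delta-facts}.2 gives $\delta(\lambda,0)\equiv\pi$, so both equations hold simultaneously. If $\lambda=0$, then Lemma~\ref{lem:delta-facts}.2 again gives $\delta(0,C)\equiv\pi$ for every $C$, and Lemma~\ref{lem:beta-facts}.2 gives $\beta(0,C)\equiv -C$; the second equation $\beta\equiv 0$ then forces $C\equiv 0\bmod 2\pi$, which for $C\in[0,\pi)$ means $C=0$, bringing us back to the previously handled case.

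There is no real obstacle here: the entire argument is a direct appeal to items already proven in Lemmas~\ref{lem:delta-facts} and~\ref{lem:beta-facts}, and the only mild subtlety is that the $\lambda=0$ branch of the disjunction, strictly speaking, also requires $C=0$ in order to actually satisfy $\beta\equiv 0$; this is consistent with the statement because that branch simply collapses into the $C=0$ case rather than producing spurious simultaneous solutions.
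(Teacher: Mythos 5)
Your proof is correct and takes the same route as the paper, which simply asserts that the lemma ``follows from the properties of the functions $\beta$ and $\delta$'' --- you have filled in exactly the intended appeals to Lemma~\ref{lem:delta-facts}.2 and Lemma~\ref{lem:beta-facts}.2--3. Your closing remark also correctly identifies the one real subtlety: since $\beta(0,C)\equiv -C$, the set of simultaneous solutions is precisely $\{C=0\}$, so the ``if'' direction of the stated biconditional only holds because the $\lambda=0$ branch collapses into the $C=0$ case; this is an imprecision in the lemma's phrasing rather than a gap in your argument.
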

\begin{proof}
    Follows from the properties of the functions $\beta$ and $\delta$. 
\end{proof}

\begin{lemma}\label{lem:death_by_algebra}
    Given $N,s \in \N$, $N > 2$, $1 \leq s < N$, and $t \in \R$, $0 < t < s$, there exist unique $\lambda = \lambda(s,t) \in \left[\piovertwo - \frac{\pi}{2N}s, \frac{\pi}{2}\right)$ and $C = C(s,t) \in (0, \pi)$ such that 
    \begin{equation}\label{eqn:pi_over_N_multiples}
        \delta(\lambda, C) \equiv \pioverN s \quad \text{and} \quad\beta(\lambda, C) \equiv -\pioverN t .
    \end{equation}
    Furthermore,
    \begin{enumerate}[label=(\roman*)]
        \item $\lambda (s,\frac{s}{2}) = \piovertwo - \frac{\pi}{2N}s$ and $C(s,\frac{s}{2}) = \frac{\pi}{2}$.

        \item For a fixed $s$, $\lambda(s,-)$ is two-to-one on $(0, s)$, with $\lambda (s, s-t) = \lambda (s,t)$.
        
        \item For a fixed $s$, $C(s,-)$ is bijective on $(0, s)$, with $C(s, s-t) = \pi - C(s,t)$.
    \end{enumerate}
\end{lemma}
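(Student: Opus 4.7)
The plan is to reduce the two defining conditions to a single algebraic curve in $(\lambda, C)$-space, prove a monotonicity identity for $\beta$ along the curve, and derive the symmetries from an associated conserved quantity.

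First, using $\delta(\lambda, \varphi) \equiv \pi - 2\arctan(\sin\varphi \tan\lambda)$, the condition $\delta(\lambda, C) \equiv \pioverN s$ reduces to the algebraic equation $\sin C \tan \lambda = T$ with $T := \cot\!\left(\frac{\pi s}{2N}\right)$. Since $\sin C \leq 1$, this forces $\tan \lambda \geq T$, i.e.\ $\lambda \in [\lambda_0, \piovertwo)$ where $\lambda_0 := \piovertwo - \frac{\pi s}{2N}$. This equation defines a simple continuous curve $\mathcal{L}$: as $C$ increases from $0^+$ to $\pi^-$, $\lambda$ decreases from $\piovertwo^-$ to $\lambda_0$ along the branch with $C \in (0, \piovertwo)$, then increases back to $\piovertwo^-$ along the branch with $C \in (\piovertwo, \pi)$, the two halves meeting at $(\lambda_0, \piovertwo)$.

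Next, a direct differentiation of the formula for $\beta$ yields the clean expressions $\beta_\lambda = \frac{\sin\varphi}{1 + \sin\lambda\cos\varphi}$ and $\beta_\varphi = \frac{-\cos\lambda}{1 + \sin\lambda\cos\varphi}$. Implicit differentiation of $\sin C \tan \lambda = T$ gives $\frac{d\lambda}{dC} = -\frac{\sin\lambda\cos\lambda\cos C}{\sin C}$ along $\mathcal{L}$, and substituting into the chain rule collapses to the remarkable identity
\begin{equation*}
    \left.\frac{d\beta}{dC}\right|_{\mathcal{L}} = -\cos\lambda.
\end{equation*}
Since $\cos \lambda > 0$ on $[0, \piovertwo)$, $\beta$ is strictly decreasing in $C$ along $\mathcal{L}$. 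As $C \to 0^+$, $\beta \to 0$; as $C \to \pi^-$, $\beta \to -\pi$ on the natural continuous lift of $\beta|_{[0,\pi]}$ into $[-\pi, 0]$; at the junction $(\lambda_0, \piovertwo)$, Lemma~\ref{lem:beta-facts}.4 gives $\beta = \lambda_0 - \piovertwo = -\frac{\pi s}{2N}$. The hypothesis $s < N$ ensures each target value $-\frac{\pi t}{N}$ for $t \in (0, s)$ lies strictly in $(-\pi, 0)$, so by strict monotonicity and continuity each $t$ determines a unique point of $\mathcal{L}$: $t \in (0, s/2)$ on the first branch, $t = s/2$ at the junction (proving (i)), and $t \in (s/2, s)$ on the second branch.

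For (ii) and (iii), the reflection $C \mapsto \pi - C$ preserves $\sin C$ and hence $\mathcal{L}$, swapping the two branches at equal $\lambda$. Define $F(\lambda, C) := \beta(\lambda, C) + \beta(\lambda, \pi - C)$ on $\mathcal{L}$. Applying the identity above to each summand --- the first contributing $-\cos\lambda$ directly, and the second contributing $+\cos\lambda$ via the substitution $\tilde{C} = \pi - C$ together with the symmetry $\lambda(\pi - C) = \lambda(C)$ --- gives $dF/dC = 0$, so $F$ is constant along $\mathcal{L}$. Evaluating at the junction yields $F = 2(\lambda_0 - \piovertwo) = -\frac{\pi s}{N}$, so $\beta(\lambda, \pi - C) = -\frac{\pi s}{N} - \beta(\lambda, C)$. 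Hence if $(\lambda, C)$ realizes parameter $t$ on one branch, then $(\lambda, \pi - C)$ realizes $s - t$ on the other, yielding both $\lambda(s, s - t) = \lambda(s, t)$ and $C(s, s - t) = \pi - C(s, t)$. The main technical obstacle is verifying the derivative identity $d\beta/dC|_{\mathcal{L}} = -\cos\lambda$ by explicit computation; once that is in hand, monotonicity and the conservation of $F$ follow essentially for free, and the only remaining subtlety is treating the $\modtwopi$ ambiguity in $\beta$ via a consistent continuous lift on $C \in [0, \pi]$.
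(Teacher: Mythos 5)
Your reduction to the curve $\sin C\tan\lambda=\cot\!\left(\frac{\pi}{2N}s\right)$ is the same as the paper's, but your key technical step is genuinely different: the paper parametrises the branch $C^0\in(0,\piovertwo]$ by $\lambda$ and proves monotonicity of $t^0(s,-)$ by a messy explicit computation of $\partial\beta(\lambda,C^0)/\partial\lambda$ (Lemma~\ref{lem:t0-properties}), handling the second branch separately via Eq.~\ref{eq:beta-equidistant}, whereas you parametrise the whole curve by $C$ and obtain the identity $d\beta/dC|_{\mathcal{L}}=-\cos\lambda$. I have checked that $\beta_\lambda=\frac{\sin\varphi}{1+\sin\lambda\cos\varphi}$, $\beta_\varphi=\frac{-\cos\lambda}{1+\sin\lambda\cos\varphi}$, and the resulting cancellation are all correct, and your conserved quantity $F\equiv-\pioverN s$ reproduces the paper's Eq.~\ref{eq:beta-equidistant}. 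This is cleaner than the paper's argument and gives monotonicity on both branches at once.

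There is, however, one concrete error: the claim that $\beta\to-\pi$ as $C\to\pi^-$ along $\mathcal{L}$ is false. Because $\lambda\to\piovertwo^-$ simultaneously as $C\to\pi^-$ on the curve, the joint limit of the continuous lift is $-\pioverN s$, not $-\pi$; indeed your own conservation law forces this, since $F\equiv-\pioverN s$ and $\beta(\lambda,C)\to 0$ as $C\to 0^+$ give $\beta(\lambda,\pi-C)\to-\pioverN s$. (One can also see it directly: an expansion of $\beta$ near $C=\pi$, $\lambda=\piovertwo$ subject to $\sin C\tan\lambda=\cot\!\left(\frac{\pi}{2N}s\right)$ yields $2\pi-\pioverN s$ for the raw formula.) As written, your limit would make every $t\in(0,N)$ attainable, which contradicts Remark~\ref{rem:no-t-greater-than-s}. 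The fix is harmless and in fact tightens the argument: with the corrected endpoint the lifted $\beta$ decreases strictly from $0$ to $-\pioverN s$ over $C\in(0,\pi)$, so the attainable parameters are \emph{exactly} $t\in(0,s)$, each realised by a unique $C$, and the rest of your argument (the junction value for (i), the reflection/conservation argument for (ii) and (iii)) goes through unchanged.
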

\begin{proof}
    For $s < N$, we note that for a fixed $\lambda$, $\delta(\lambda, -)$ has a minimum (modulo $2\pi$) of $\pi - 2\lambda$ at $C = \piovertwo$. Thus Eq.\ \ref{eqn:pi_over_N_multiples} is satisfied only if $\lambda \geq \piovertwo - \frac{\pi}{2N}s$.

    Let $\lambda := \lambda(s,t)$ and $C:= C(s,t)$. Then,  
    \begin{align}
        \delta(\lambda, C) \equiv \pioverN s
        \quad\iff\quad \pi - 2\arctan(\sin C\tan\lambda) \equiv \pioverN s \quad\iff\quad \sin C = \frac{\tan(\piovertwo - \frac{\pi}{2N}s)}{\tan\lambda}.\label{eqn:icky1}
    \end{align}
    Let us set $C^0(s, \lambda) := \arcsin\left(\tan(\piovertwo - \frac{\pi}{2N}s)/\tan\lambda\right)$ and $C^1 := \pi - C^0$, noting that $C^0$ and $C^1$ both satisfy Eq.\ \ref{eqn:icky1} for $C$. For a fixed $s$, we have that $C^0(s, -)$ and $C^1(s, -)$ are functions of $\lambda$ that give us the at most two different angles $C$ such that $\delta(\lambda, C) \equiv \pioverN s$. Substituting $C^0$ into the expression for $\beta$ (Eq.\ \ref{eqn:beta}), we get
    \begin{equation}\label{eqn:icky2}
        \beta\bigl(\lambda,\, C^0(s,\lambda)\bigr) = \arcsin\left(\cot\left(\frac{\pi}{2N}s\right) \cot\lambda\right) - 2 \arctan\left(\frac{\cos\frac{\lambda}{2} \cot\left(\frac{\pi}{2N}s\right) \cot\lambda}{\sqrt{1 -\cot^2\left(\frac{\pi}{2N}s\right) \cot^2\lambda}\cos\frac{\lambda}{2} + \sin\frac{\lambda}{2}}\right).
    \end{equation}
    
    Let $t^0(s, \lambda) := -\frac{N}{\pi} \beta(\lambda, C^0)$. By Lemma \ref{lem:t0-properties} below, $t^0(s, -)$ has a well-defined inverse $\lambda^0(s, -): (0, \frac{s}{2}] \rightarrow \left[\frac{\pi}{2} - \frac{\pi}{2N}s, \frac{\pi}{2}\right)$. Furthermore, from the definition of $\beta$, we have $\beta(\lambda, \pi+C^0) \equiv - \beta(\lambda, \pi-C^0)$. Thus 
    \begin{equation}\label{eq:beta-equidistant}
        \beta(\lambda, C^0) + \beta(\lambda, C^1) \equiv - \delta (\lambda, C^0) \equiv -\pioverN s ,
    \end{equation}
    so $\beta(\lambda, C^1) \equiv -\pioverN (s-t)$ if and only if $\beta(\lambda, C^0) \equiv -\pioverN t$.

    Putting all this together, we can construct $\lambda$ and $C$ as functions of $s$ and $t$ as follows:
    \begin{equation}
        \lambda(s,t) = \begin{cases}
            \lambda^0(s,t) \quad &\text{if } 0 < t \leq s/2,\\
            \lambda^0(s, s-t) \quad &\text{if } s/2 < t < s;
        \end{cases}
    \end{equation}
    \begin{equation}
        C(s,t) = \begin{cases}
            \arcsin\left(\tan(\piovertwo - \frac{\pi}{2N}s)/\tan\bigl(\lambda^0(s,t)\bigr)\right) \quad &\text{if } 0 < t \leq s/2,\\
            \pi - \arcsin\left(\tan(\piovertwo - \frac{\pi}{2N}s)/\tan\bigl(\lambda^0(s, s-t)\bigr)\right) \quad &\text{if } s/2 < t < s.
        \end{cases}
    \end{equation}
    The above reasoning shows that these are well-defined. Hence we have proved the statement of the lemma. Statement (i) follows by noting that $\delta(\piovertwo - \frac{\pi}{2N}s, \piovertwo) \equiv \pioverN s $ and $\beta(\piovertwo - \frac{\pi}{2N}s, \piovertwo) \equiv -\frac{\pi}{2N}s$. Statements (ii) and (iii) immediately follow from the constructions of $\lambda(s,t)$ and $C(s,t)$.
\end{proof}

\begin{lemma}\label{lem:t0-properties}
    Let $t^0(s, \lambda) := -\frac{N}{\pi} \beta(\lambda, C^0)$. The function $t^0(s, -)$ is strictly decreasing, and hence bijective, on $\left[\frac{\pi}{2} - \frac{\pi}{2N}s, \frac{\pi}{2}\right)$. Furthermore, $t^0\left(s, \left[\frac{\pi}{2} - \frac{\pi}{2N}s, \frac{\pi}{2}\right)\right) = (0, \frac{s}{2}]$.
\end{lemma}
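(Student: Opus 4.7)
The plan is to derive a clean half-angle identity for $\beta$ that exposes the monotonicity of $t^0$, and then to parametrize the defining curve by $\varphi = C^0$ rather than by $\lambda$. Starting from Eq.\ \ref{eqn:beta}, set $\theta = (\varphi - \beta)/2$ so that $\tan\theta$ equals the argument of the $\arctan$. Cross-multiplying by the denominator and simplifying via $\sin\theta\sin(\lambda/2) = \cos(\lambda/2)\sin(\varphi - \theta)$ yields $\tan(\lambda/2) = \sin((\varphi+\beta)/2)/\sin((\varphi-\beta)/2)$. Expanding each sine by the sum-of-angles formula and solving for $\tan(\beta/2)$ gives the compact identity
\begin{equation*}
\tan(\beta(\lambda,\varphi)/2) \;=\; -\tan(\varphi/2)\tan(\pi/4 - \lambda/2).
\end{equation*}
Defining $r(\lambda, \varphi) := \tan(\varphi/2)\tan(\pi/4 - \lambda/2) \ge 0$ and taking the principal branch of $\arctan$ (consistent with the endpoint values below), this reads $\beta = -2\arctan r$, so $t^0(s, \lambda) = (2N/\pi)\arctan\bigl(r(\lambda, C^0(s, \lambda))\bigr)$.

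On the defining curve $\sin\varphi\tan\lambda = \cot(\pi s/(2N))$, I parametrize by $\varphi = C^0 \in (0, \pi/2]$. Implicit differentiation yields $d\lambda/d\varphi = -\cot\varphi\,\sin\lambda\cos\lambda \le 0$, so $\varphi \mapsto \lambda(\varphi)$ is a strictly decreasing continuous bijection $(0, \pi/2] \to [\pi/2 - \pi s/(2N), \pi/2)$. Differentiating $r$ along this parametrization,
\begin{equation*}
\frac{dr}{d\varphi} \;=\; \tfrac{1}{2}\sec^{2}(\varphi/2)\tan(\pi/4 - \lambda/2) \;-\; \tfrac{1}{2}\tan(\varphi/2)\sec^{2}(\pi/4 - \lambda/2)\,\frac{d\lambda}{d\varphi},
\end{equation*}
both summands are nonnegative throughout $(0, \pi/2]$ and the first is strictly positive (since $\pi/4 - \lambda/2 > 0$), so $r$ is strictly increasing in $\varphi$. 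Composing with the strictly decreasing $\lambda \mapsto \varphi$ and the strictly increasing $\arctan$ then shows that $t^0(s, \lambda)$ is strictly decreasing in $\lambda$. For the endpoints: at $\lambda = \pi/2 - \pi s/(2N)$ we have $C^0 = \pi/2$, hence $r = \tan(\pi s/(4N))$ and $t^0 = s/2$; as $\lambda \to (\pi/2)^-$, the constraint forces $C^0 \to 0^+$, so $r \to 0$ and $t^0 \to 0$. Continuity, strict monotonicity, and the intermediate value theorem then give that $t^0(s, -)$ is a bijection onto $(0, s/2]$.

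I expect the main technical effort to lie in extracting the compact identity $\tan(\beta/2) = -\tan(\varphi/2)\tan(\pi/4 - \lambda/2)$ from the raw formula in Eq.\ \ref{eqn:beta}; this is conceptually routine but does require several careful trigonometric manipulations and a branch-tracking argument. Once the identity is in hand, the rest of the proof reduces to a single chain-rule computation paired with a one-line boundary-value check.
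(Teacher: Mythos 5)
Your argument is correct, and it takes a genuinely different route from the paper's. The paper works directly with the explicit composite expression for $\beta\bigl(\lambda, C^0(s,\lambda)\bigr)$ in Eq.\ \ref{eqn:icky2}: it evaluates the left endpoint, observes continuity and the limit $0$ as $\lambda \to \piovertwo^-$, and then computes $\partial\beta(\lambda,C^0)/\partial\lambda$ and shows it can vanish only if a nonnegative square root equals $-1/\sin\lambda < 0$, which is impossible; strict monotonicity plus the endpoint data then forces the function to be strictly decreasing onto $(0, \tfrac{s}{2}]$. You instead first distil the raw formula in Eq.\ \ref{eqn:beta} into the product identity $\tan(\beta/2) = -\tan(\varphi/2)\tan(\pi/4-\lambda/2)$ (which checks out against Lemma \ref{lem:beta-facts}.2 and \ref{lem:beta-facts}.4 and whose derivation via $\tan(\lambda/2) = \sin((\varphi+\beta)/2)/\sin((\varphi-\beta)/2)$ is sound), then reparametrize the constraint curve $\sin\varphi\tan\lambda = \cot(\tfrac{\pi s}{2N})$ by $\varphi = C^0$ and observe that both factors of $r$ are positive and increasing in $\varphi$ along the curve. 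What your approach buys is transparency: the monotonicity becomes a statement about a product of two manifestly increasing positive factors rather than the nonvanishing of a complicated derivative, and the identity for $\tan(\beta/2)$ is of independent use elsewhere. What it costs is the up-front work of establishing that identity and pinning down the $\arctan$ branch (you correctly note $\beta = -2\arctan r$ exactly, which follows since the $\arctan$ argument in Eq.\ \ref{eqn:beta} is positive on the relevant range, so $\beta \in (\varphi - \pi, \varphi)$, and $\tan(\beta/2) = -r$ with $r \in (0,1]$ then forces the principal branch); the paper avoids this by never leaving the given closed form. Both proofs are complete for the stated claim.
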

\begin{proof}
    We have $\beta(\piovertwo - \frac{\pi}{2N}s, \piovertwo) = -\frac{\pi}{2N}s$, so $t^0(s, \piovertwo - \frac{\pi}{2N}s) = \frac{s}{2}$. Next, the expression for $\beta(\lambda, C^0)$ in Eq.\ \ref{eqn:icky2} shows that $t^0(s, -)$ has no discontinuities for $\lambda \in \left[\frac{\pi}{2} - \frac{\pi}{2N}s, \frac{\pi}{2}\right)$, and that $t^0(s, \lambda) \rightarrow 0$ as $\lambda \rightarrow \piovertwo^-$. To show that $t^0(s,-)$ is strictly decreasing on the given interval, one can find the partial derivative of Eq.\ \ref{eqn:icky2} with respect to $\lambda$ and derive that
    \begin{equation}
        \frac{\partial \beta(\lambda, C^0)}{\partial\lambda} = 0 \quad\iff\quad \sqrt{\frac{\cos^2\left(\frac{\pi}{2N}s\right) - \sin^2\lambda}{\left(\cos^2\left(\frac{\pi}{2N}s\right) - 1\right) \sin^2\lambda}} = -\frac{1}{\sin \lambda},
    \end{equation}
    which is never the case since $-1/\sin\lambda < 0$ for $\lambda \in \left[\frac{\pi}{2} - \frac{\pi}{2N}s, \frac{\pi}{2}\right)$. Thus, $t^0(s, -)$ is strictly monotone on the given interval.
\end{proof}

\begin{remark}\label{rem:no-t-greater-than-s}
    By the analysis in the two proofs above of the image of $t^0$ and the relationship between $\beta(\lambda, C^0)$ and $\beta(\lambda, C^1)$, we deduce that Eq.\ \ref{eqn:pi_over_N_multiples} is not satisfied if $s \leq t < N$.
\end{remark}

Figure \ref{fig:lambda_and_C} provides a visualisation of these technical lemmas, showing how the value of $t$ relative to $s$ affects the values of $\lambda$ and $C$ that satisfy Eq.\ \ref{eqn:pi_over_N_multiples}.

\begin{figure}
    \centering
    \resizebox{0.5\textwidth}{!}{\input{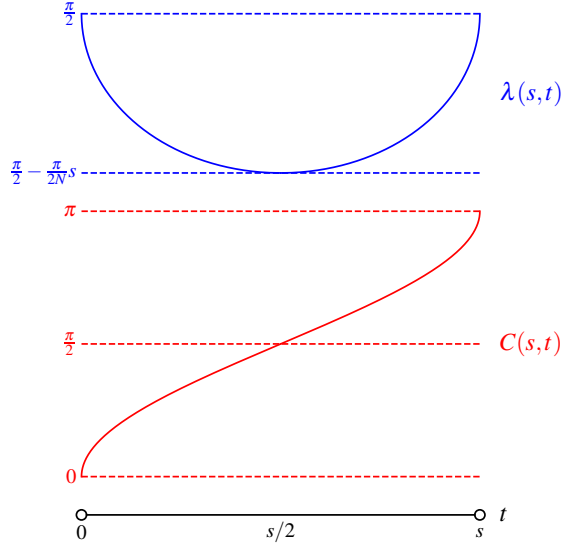}}
    \caption{A visualisation of Lemma \ref{lem:death_by_algebra}. For a fixed integer $s$ with $1 \leq s < N$, $t$ can take any real value greater than $0$ and less than $s$. The two graphs above the interval line for $t$ show the unique values of $\lambda$ and $C$ satisfying Eq.\ \ref{eqn:pi_over_N_multiples} for each $t$. Note the even symmetry of $\lambda(s,t)$ and the odd symmetry of $C(s,t)$, both about $t=s/2$.}
    \label{fig:lambda_and_C}
\end{figure}

We are now ready to prove the main theorem classifying all {\regular} paradoxes.

\begin{theorem}\label{thm:partial_classification}
    Let $(\Bsimple, \M)$ be an {\regular} paradox.  Then it is equivalent to one of the paradoxes described in Theorem \ref{thm:new_paradoxes}. Furthermore:
    \begin{enumerate}[label=(\roman*)]
        \item Condition (b) is satisfied by a unique {\regular} paradox for each $(s,t) \in \Z^2$, with $s$ even and $t$ odd, satisfying $1 \leq s < N$ and $1 \leq t \leq s - 1$. Condition (c) is satisfied by a unique {\regular} paradox for each $(s,t') \in \Z^2$, with $s$ even, satisfying $1 \leq s < N$ and $0 \leq t' \leq s/2 - 1$.
        \item Condition (d) is satisfied by at least one {\regular} paradox for some $N$ and $s,s',t' \in \Z$, with $s,s'$ even and $t'$ odd, satisfying $1 \leq s < s' < N$ and $1 \leq |t'| \leq N-1$.
    \end{enumerate}
\end{theorem}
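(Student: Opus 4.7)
The plan is to combine the structural constraints from Lemmas \ref{lem:delta_beta_restrict}, \ref{lem:even_odd_multiples}, and \ref{lem:first_two_qubit_mmts} to force any $N$-regular paradox into one of the four cases of Theorem \ref{thm:new_paradoxes}, then invoke Lemma \ref{lem:death_by_algebra} to count or construct the corresponding $(\lambda, C_0, C_1)$. As a first normalization, by Lemma \ref{lem:first_two_qubit_mmts} we may assume $M_1 = \pioverN \Z_N$ and $M_2 = \pioverN \Z_N + \mu$ for some $\mu \in [0, \pi)$; by Lemma \ref{lem:even_odd_multiples} we may write $\delta(\lambda, C_l) \equiv \pioverN s_l$ with $s_l$ even, and $\beta(\lambda, C_1) - \beta(\lambda, C_0) \equiv \pioverN t'$ with $t'$ odd.

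The main classification case-splits first on whether some $C_l$ equals $0$. If $N = 2$, the only option is the GHZ paradox of case (a). If $N > 2$ and, without loss of generality, $C_0 = 0$, then $\delta(\lambda, 0) \equiv \pi$ forces $N$ to be even so that $s_0 = N$ is an even integer; by Lemma \ref{lem:beta-facts}.3 we have $\beta(\lambda, C_0) \equiv 0$, so setting $t = -t'$ places us in case (b). The impossibility condition Eq.\ \ref{eqn:imposs_events_interpolant} with $\beta(\lambda, C_0) \equiv 0$ forces $A_j + B_k \equiv 0 \pmod{\pi}$, giving $\mu \in \pioverN\Z$ and hence $M_1 = M_2$ as sets. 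Uniqueness of $(\lambda, C_1)$ for each admissible $(s, t)$ is the content of Lemma \ref{lem:death_by_algebra}, while the range $1 \le t \le s - 1$ is enforced by Remark \ref{rem:no-t-greater-than-s} (which rules out $t \ge s$).

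If neither $C_l$ equals $0$, a second case-split on whether $s_0 = s_1$ handles cases (c) and (d). When $s_0 = s_1 = s$, Lemma \ref{lem:death_by_algebra} shows the only two solutions of $\delta(\lambda, C) \equiv \pioverN s$ are $C^0$ and $\pi - C^0$, forcing $C_1 = \pi - C_0$; then Eq.\ \ref{eq:beta-equidistant} gives $\beta(\lambda, C_0) + \beta(\lambda, C_1) \equiv -\pioverN s$, so requiring the difference to be an odd multiple of $\pioverN$ imposes the half-integer parametrization $\beta(\lambda, C_0) \equiv -\pioverN(t' + \tfrac{1}{2})$ and $\beta(\lambda, C_1) \equiv -\pioverN(s - t' - \tfrac{1}{2})$. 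Lemma \ref{lem:death_by_algebra} again supplies a unique $(\lambda, C_0)$ for each admissible $(s, t')$, and the half-integer shift forces $\mu = \pi/(2N)$, putting us in case (c). When $s_0 \neq s_1$, relabel so $s_0 < s_1$, define $t \in (0, s_0)$ by $\beta(\lambda, C_0) \equiv -\pioverN t$, and read off $\mu = \pioverN(\lceil t \rceil - t)$ from the impossibility condition: this is case (d). Part (i) is then immediate from the uniqueness statements in Lemma \ref{lem:death_by_algebra}.

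The hard part is (ii): establishing that case (d) is nonempty. Here there is no closed-form guarantee, since one must simultaneously solve $\delta(\lambda, C_0) \equiv \pioverN s$, $\delta(\lambda, C_1) \equiv \pioverN s'$, and $\beta(\lambda, C_1) - \beta(\lambda, C_0) \equiv \pioverN t'$ in a single transcendental unknown. My plan is to fix the smallest admissible pair, e.g.\ $(s, s') = (2, 4)$ with $N \ge 5$, and to study how $\beta(\lambda, C_1) - \beta(\lambda, C_0)$ varies across the four sign-branch choices $C_l \in \{C^0, \pi - C^0\}$ as $\lambda$ ranges over the common domain $\bigl[\piovertwo - \tfrac{\pi}{2N} s',\, \piovertwo\bigr)$. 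By the monotonicity of $\beta$ established in Lemma \ref{lem:t0-properties}, an intermediate value theorem argument on the appropriate branch should produce a $\lambda$ for which the $\beta$-difference is an odd multiple of $\pioverN$; failing a clean analytic argument, a direct numerical verification for explicit small $(N, s, s', t')$ already suffices. The principal obstacle is thus analytic rather than combinatorial: taming two coupled transcendental constraints in one variable, either by a careful IVT argument or by explicit computation.
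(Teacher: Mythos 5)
Your proposal is correct and follows essentially the same route as the paper's proof: the same case split on $N=2$, $C_0=0$, equal $\delta$'s, and unequal $\delta$'s, driven by Lemmas \ref{lem:even_odd_multiples}, \ref{lem:first_two_qubit_mmts}, and \ref{lem:death_by_algebra}, with Remark \ref{rem:no-t-greater-than-s} bounding $t$. The paper likewise settles part (ii) by exhibiting an explicit numerical solution (at $N=8$, $s=4$, $s'=6$, $t'=1$) rather than an analytic existence argument, so your fallback to direct numerical verification matches what the authors actually do.
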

\begin{proof}
By Lemma \ref{lem:even_odd_multiples}, $\delta(\lambda, C_l)$ is an even multiple of $\pioverN $ and $\beta(\lambda, C_0) - \beta(\lambda, C_1)$ is an odd multiple of $\pioverN $. We analyse four different cases that exhaust all possibilities for $N$, $C_0$, $C_1$, and $\delta(\lambda, C_l)$. In each case, the sets $M_1, M_2$ are determined by Lemma \ref{lem:first_two_qubit_mmts} and the requirement to satisfy the model's impossible event equations.

Suppose that $N=2$. Then $\delta(\lambda, C_l) \equiv \pi$ for both $l=0,1$, which is the case if and only if $\lambda = 0$. Then, since $\beta(0, C_0) - \beta(0, C_1) \equiv \piovertwo$, we have $C_1 - C_0 = \piovertwo$. Hence the paradox satisfies condition (a) in Theorem \ref{thm:new_paradoxes}.

Now suppose that $N > 2$. First, we check the case where $C_0 = 0$. Note that $\delta(\lambda, 0) \equiv \pi$, which is an even multiple of $\pioverN $ if and only if $N$ is even. Furthermore, $\beta(\lambda, 0) = 0$ for all $\lambda$. Consequently, any $C_1$ that satisfies Eq.\ \ref{eqn:case_b} will ensure that the difference $\beta(\lambda, C_0) - \beta(\lambda, C_1)$ is an odd multiple of $\pioverN $. By Lemma \ref{lem:death_by_algebra}, there exist unique satisfying $\lambda$ and $C_1$ for each $(s,t) \in \Z^2$, with $s$ even and $t$ odd, satisfying $1 \leq s < N$ and $1 \leq t \leq s - 1$. This paradox satisfies condition (b). By Remark \ref{rem:no-t-greater-than-s}, we cannot have a paradox if $s < t < N$.

Next, consider when both $C_0$ and $C_1$ are nonzero and $\lambda$, $C_0$ and $C_1$ are such that $\delta(\lambda, C_0) \equiv \delta(\lambda, C_1) \equiv \pioverN s$ for $1 \leq s < N$ and $s$ even. We additionally need the $\beta(\lambda, C_l)$'s to differ by an odd multiple of $\pioverN $. Let $t$ be such that $\beta(\lambda, C_0) \equiv -\pioverN t$. Lemma \ref{lem:death_by_algebra}, in particular statements (ii) and (iii), and the paragraph of the proof containing Eq.\ \ref{eq:beta-equidistant}, tell us that $C_1 = \pi - C_0$ and $\beta(\lambda, C_1) \equiv -\pioverN (s-t)$. Hence $\beta(\lambda, C_0) - \beta(\lambda, C_1) \equiv \pioverN (s - 2t)$. Since $s - 2t$ is odd if and only if $2t$ is odd, $t$ must be a half-integer. Thus this paradox satisfies condition (c). The bounds on $t'$ follow from Remark \ref{rem:no-t-greater-than-s}.

Finally, consider when both $C_0$ and $C_1$ are nonzero and $\delta(\lambda, C_0) \not\equiv \delta(\lambda, C_1)$. Referring to Eq.\ \ref{eqn:icky1}, a triple $(\lambda, C_0, C_1)$ satisfying condition (d) exists only if $C_0 = \arcsin\left(\tan(\piovertwo - \frac{\pi}{2N}s)/\tan\lambda\right)$ or $\pi - \arcsin\left(\tan(\piovertwo - \frac{\pi}{2N}s)/\tan\lambda\right)$, and similarly for $C_1$ (replacing $s$ with $s'$). Substituting these into $\beta(\lambda, C_0) - \beta(\lambda, C_1)$, we obtain a very complicated function that is difficult to analyse. However, solutions to Eq.\ \ref{eqn:icky3} with these substitutions do exist for some values of $N$, $s$, $s'$ and $t'$\footnote{E.g.\ if $N=8$, $s=4$, $s'=6$ and $t' = 1$, one can numerically find the solution $\lambda \approx 0.8129$, $C_0 \approx 1.2418$, $C_1 \approx 0.4028$.}.
\end{proof}

\subsection{A path to a complete classification of three-qubit nonlocality paradoxes}
\label{section:4.complete_class}
In the previous subsections, we detailed a comprehensive investigation of paradoxes that are {\regular}, i.e.\ maximally impossible, maximal rank paradoxes with two third-qubit measurements. The next step for the classification program is to drop some or all of these assumptions.

We observe that there exist maximally impossible paradoxes with $|M_3| = 2$ that are not maximal rank. For example, let $N = 4$. There exist $\lambda, C_0, C_1$ such that $\delta(\lambda, C_l) \equiv \frac{3\pi}{4} $ for $l \in \Z_2$, $\beta(\lambda, C_0) \equiv -\frac{\pi}{4} $, and $\beta(\lambda, C_1) \equiv -\frac{\pi}{2} $\footnote{Numerically, $\lambda \approx 0.4271$, $C_0 = \pi - C_1 \approx 1.1437$.}. Note that here $\delta(\lambda, C_l)$ is an odd multiple of $\pioverN $. As usual, let $M_1 = M_2 = \frac{\pi}{4}\Z_4$. See Figure~\ref{fig:diagram2} for an illustration. In this case, the four $\Z_2$-linear systems $\Psi(z_0, z_1)$ are all inconsistent, so that $(\Bsimple, \M)$ is a paradox. However, the parities $\bigoplus_j r(j,l,0)$ and $\bigoplus_j r(j,l,1)$ are not equal for either $l \in \Z_2$. Thus, a parity argument akin to the proof of Lemma \ref{lem:opposite_parities}, which considers all equations, is not applicable to two of the systems, which necessitate a parity argument using a proper subset of the equations. Further study is needed to determine the exact conditions for maximally impossible quantum scenarios to be paradoxes.

We have shown that there exist paradoxes $(\Bsimple<\lambda_{N,t}>, \M)$ where $\lambda = \piovertwo - \pioverN t$, $1 \leq t < N/2$, for all $N$ even and $t$ odd. However, there do not exist paradoxes $(\Bsimple<\lambda_{N,t}>, \M)$ such that $|M_3| = 2$ and $\lambda = \piovertwo - \pioverN t$ for $N$ \emph{odd}. It remains to be determined whether a paradox, with such a $\lambda$ or otherwise, is possible if we expand the measurement set $M_3$ to have more than two measurements.

Let us now examine general balanced states $\Bstate$ and seek necessary conditions on the $\lambda_i$'s for $(\Bstate, \M)$ to be a paradox. A key observation from our analysis of {\twolambdaszero} states is that the number of impossible events is maximised in each paradox that we have found. If an empirical model does not have many impossible events across its contexts, the likelihood of a consistent global assignment intuitively seems higher. So, we expect that for a paradox to be witnessed, the empirical model must contain a sufficiently large number of impossible events. 

The following lemma characterises empirical models where there are at least three contexts each containing four impossible events.
\begin{lemma}\label{lem:3context-4imposs_}
    Let $(\Bstate, \M)$ be a quantum scenario such that at least three contexts contain four impossible events each. Then $\Bstate$ must be an {\twolambdaszero} state. 
\end{lemma}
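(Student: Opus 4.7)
The plan is to extract strong constraints on the vector $\vec\lambda$ from the hypothesis that three distinct contexts each admit four impossible events. The starting observation is that by Lemma~\ref{lem:delta-facts}.1 any two impossible events in a fixed context differ in at least two coordinates, and the only four-element subsets of $\Z_2^3$ with pairwise Hamming distance at least two are the two cosets of the parity-check code, namely $\{000, 011, 101, 110\}$ and $\{001, 010, 100, 111\}$. So every context with four impossible events has its impossible outcomes forming exactly one such coset.

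I would first show that a context $(\varphi_1, \varphi_2, \varphi_3)$ containing four impossible events forces $\delta(\lambda_i, \varphi_i) \equiv \pi$ for every $i \in \{1,2,3\}$. Applying Lemma~\ref{beta-snl} with outcomes shifted by $a_i \pi$ gives the impossibility equation $\sum_i \bigl[\beta(\lambda_i, \varphi_i) + a_i\delta(\lambda_i, \varphi_i)\bigr] \equiv \pi - \Phi$ for each impossible $\vec a$; differencing across pairs of outcomes in the same coset, together with Lemma~\ref{lem:delta-facts}.3, collapses to $\delta(\lambda_i, \varphi_i) \equiv \pi$ for every $i$. Then by Lemma~\ref{lem:delta-facts}.2, for each qubit $i$ either $\lambda_i = 0$ or $\varphi_i = 0$.

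Next I would perform a case analysis on the number of zero entries among $\lambda_1, \lambda_2, \lambda_3$. If at most one of them is $0$, then on the other two qubits the measurement angle must be $0$ in every one of the three contexts, so the three contexts can only differ in a single coordinate. A direct computation using Lemma~\ref{lem:beta-facts}.2--3 shows that along this one-parameter family the impossibility equation pins down the free angle up to a shift by $\pi$ (one candidate value per parity coset); Remark~\ref{rem:measurement} then restricts us to $[0, \pi)$, so at most one of the two candidates actually lies in the allowed range. This yields at most one context with four impossible events, contradicting the hypothesis. Therefore at least two of the $\lambda_i$ must vanish, and after permuting qubits and absorbing the residual phase $\Phi$ via the local unitary $\diag(1, e^{-i\Phi}) \otimes I \otimes I$ discussed after Lemma~\ref{lem:beta-facts}, $\Bstate$ is equivalent to $\Bsimple$, i.e.\ an interpolant state.

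The main technical obstacle is the uniqueness step inside the contradiction: verifying that the two candidate angles (one per parity coset) which would give four impossible events collapse, modulo the restriction to $[0, \pi)$, to a single allowed angle. I expect this to be routine given the explicit expressions for $\beta$ and $\delta$ and the properties already recorded in Lemmas~\ref{lem:beta-facts} and~\ref{lem:delta-facts}, but it is where the numerical bookkeeping is most delicate.
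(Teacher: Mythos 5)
Your proof is correct, and its core is the same as the paper's: a context with four impossible events forces $\delta(\lambda_i,\varphi_i)\equiv\pi$ on every qubit (the paper gets this directly from Lemma~\ref{lem:impossible-events}.1 rather than re-deriving it from the coset structure, but the content is identical), and then Lemma~\ref{lem:delta-facts}.2 gives $\lambda_i=0$ or $\varphi_i=0$ for each $i$. Where you diverge is in how the endgame is organised. The paper argues forward by cases on the shape of the three contexts (how many nonzero angles each contains), concluding in each case that two $\lambda_i$ vanish; you instead argue by contradiction on the number of vanishing $\lambda_i$'s, reducing to the claim that at most one context of the form $(\varphi,0,0)$ can carry four impossible events. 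Your uniqueness computation is right: with $\lambda_1=0$ the two coset conditions read $\varphi\equiv\Phi$ and $\varphi\equiv\Phi-\pi\pmod{2\pi}$, which differ by exactly $\pi$, so only one representative survives the restriction to $[0,\pi)$ (and if $\lambda_1\neq 0$ as well, the angle is forced to $0$ outright). This is worth something: the paper's case analysis quietly asserts that two contexts each with a single nonzero angle "must" have those angles on different qubits, and your counting argument is precisely the justification for that assertion, so your version is, if anything, the more complete of the two. The only cosmetic caveat is that your phrase "pins down the free angle up to a shift by $\pi$" holds because $\beta(0,\cdot)$ is linear; for general $\lambda$ the two coset solutions would not differ by $\pi$, but in your setting the relevant $\lambda$ is forced to be $0$ (or the angle forced to be $0$), so the step goes through exactly as you expect.
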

The proof can be found in Appendix~\ref{AppendixB.2}.

Along with Lemma \ref{lem:3context-4imposs_}, we observe other facts that also greatly restrict $\vec\lambda$ as long as we have sufficiently many impossible events across different contexts. For example, suppose $(\Bstate, \M)$ is a quantum scenario in which a measurement $A$ is in two contexts that both contain at least two impossible events, related by flipping the outcome of $A$ and another measurement on the same qubit, say $i$. Then $\lambda_i = 0$.

This evidence strongly suggests that two of the $\lambda_i$'s must be zero for $(\Bstate, \M)$ to be a paradox. This leads us to propose the following conjecture.

\begin{conjecture}\label{conj:two-lambda-zero}
    For the quantum scenario $(\Bstate, \M)$ to be a paradox, $\Bstate$ must be an {\twolambdaszero} state.
\end{conjecture}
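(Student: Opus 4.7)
The plan is to prove the contrapositive: if $\Bstate$ is not an interpolant state, then $(\Bstate, \M)$ is not a paradox for any $\M$. By the equivalences discussed in Section \ref{section:2}, we may assume at least two of $\lambda_1, \lambda_2, \lambda_3$ are nonzero; say $\lambda_2, \lambda_3 > 0$. The first step is to extend the analysis behind Lemma \ref{lem:3context-4imposs_} to contexts with only $2$ or $3$ impossible events. By Eq.\ \ref{eqn:imposs_} and Lemma \ref{lem:delta-facts}.3, two impossible events in a context whose outcome patterns differ on qubits $2$ and $3$ force $\sin\varphi_2\tan\lambda_2 = \pm\sin\varphi_3\tan\lambda_3$. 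With $\lambda_2, \lambda_3$ fixed and nonzero, this carves out a one-dimensional curve in $(\varphi_2,\varphi_3)$-space, and since $M_2, M_3$ are finite, only finitely many pairs $(\varphi_2,\varphi_3)$ can support such impossible events. Analogous constraints hold for flips involving qubit~$1$.

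The second step is to quantify the minimum richness of impossible events needed for a paradox. One candidate is a union bound over the global assignments satisfying Eq.\ \ref{eqn:nice_ga}: there are $2^{|M_1|+|M_2|+|M_3|}$ such assignments, and each impossible event rules out a calculable fraction of them. Sharpening this via the $\Z_2$-linear/conditional structure developed in Subsection \ref{section:3.2}, extended beyond the interpolant case by treating $\beta(\lambda_i,\varphi_i)$ as the local contribution to a modular constraint, should yield a quantitative lower bound: a paradox forces many contexts to carry at least two impossible events, and a significant fraction of these must support patterns that differ on qubits $2$ and $3$, or on qubits $1$ and (2 or 3).

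Combining these two steps, the paper's observation that two contexts sharing a measurement $A$ and each carrying $\geq 2$ impossible events related by an $A$-flip and a qubit-$i$ flip forces $\lambda_i = 0$ can be iterated to conclude $\lambda_2 = 0$ or $\lambda_3 = 0$, contradicting the assumption. The main obstacle is the second step: producing a lower bound on the number/distribution of impossible events that is simultaneously tight enough to trigger the structural contradiction above, and general enough to apply to arbitrary finite $\M$. A naive union bound is almost certainly too weak, because impossible events can eliminate overlapping sets of global assignments; what is likely required is a counting or algebro-geometric argument on the variety cut out by $\sum_i \beta(\lambda_i, \varphi_i) \equiv \pi - \Phi$ modulo $2\pi$, interacting with the constraint $g(\varphi) = g(\varphi+\pi)\oplus 1$. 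This is presumably the reason the authors leave the statement as Conjecture \ref{conj:two-lambda-zero} rather than as a theorem, and why the evidence they present is strong but structural rather than quantitative.
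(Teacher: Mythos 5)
This statement is Conjecture~\ref{conj:two-lambda-zero}; the paper does not prove it, and neither does your proposal. What the paper offers is evidence: Lemma~\ref{lem:3context-4imposs_} (three contexts with four impossible events each force an interpolant state) and the observation that two contexts sharing a measurement $A$, each with two impossible events related by flipping $A$ and a second measurement on qubit $i$, force $\lambda_i = 0$. Your first and third steps essentially restate this evidence, and your second step --- the one that would turn the evidence into a proof --- is explicitly left open. You are candid about this, which is to your credit, but it means the proposal is a research plan rather than a proof, and it should be assessed as such.

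Two concrete points about why the plan as written does not close the gap. First, step~1 gives no actual leverage: the condition $\sin\varphi_2\tan\lambda_2 = \pm\sin\varphi_3\tan\lambda_3$ does cut out a curve, but the measurement sets are chosen by the adversary, who is free to place every pair $(\varphi_2,\varphi_3)$ exactly on that curve; ``only finitely many pairs satisfy it'' is vacuous when $M_2\times M_3$ is already finite. The real content would have to be that satisfying these coincidences \emph{simultaneously across enough contexts} overdetermines the $\lambda_i$'s, and that is precisely what is not argued. Second, the quantitative core of step~2 --- a lower bound on how many contexts must carry at least two impossible events \emph{in the specific flip pattern} that triggers $\lambda_i=0$ --- is the entire difficulty. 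A union bound over the $2^{|M_1|+|M_2|+|M_3|}$ global assignments cannot work for the reason you name (heavy overlap among the sets each impossible event excludes), and no replacement is supplied. Note also that a paradox does not obviously require \emph{any} context with impossible events related by a flip on qubits $1$ and $2$ together with a shared third measurement in the configuration of the paper's observation; the known interpolant paradoxes get their inconsistency from the global $\Z_2$-linear structure, with most contexts carrying exactly two impossible events related by flipping qubits $1$ and $2$ (possible only because $\lambda_1=\lambda_2=0$). For a general balanced state you would first need to rule out paradoxes assembled from sparser or differently patterned impossible events, and that is where the conjecture genuinely remains open.
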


\section*{Acknowledgments}

ND acknowledges support from the Canada Research Chair program, NSERC Discovery Grant RGPIN-2022-03103, the NSERC-European Commission project FoQaCiA, and the Faculty of Science of Simon Fraser University. SJ acknowledges support from the NSERC-European Commission project FoQaCiA.

\bibliographystyle{eptcs}
\bibliography{refs}

\begin{thebibliography}{10}
\providecommand{\bibitemdeclare}[2]{}
\providecommand{\surnamestart}{}
\providecommand{\surnameend}{}
\providecommand{\urlprefix}{Available at }
\providecommand{\url}[1]{\texttt{#1}}
\providecommand{\href}[2]{\texttt{#2}}
\providecommand{\urlalt}[2]{\href{#1}{#2}}
\providecommand{\doi}[1]{doi:\urlalt{https://doi.org/#1}{#1}}
\providecommand{\eprint}[1]{arXiv:\urlalt{https://arxiv.org/abs/#1}{#1}}
\providecommand{\bibinfo}[2]{#2}

\bibitemdeclare{inproceedings}{Abramsky2017}
\bibitem{Abramsky2017}
\bibinfo{author}{Samson \surnamestart Abramsky\surnameend}, \bibinfo{author}{Rui~Soares \surnamestart Barbosa\surnameend}, \bibinfo{author}{Giovanni \surnamestart Car\`{u}\surnameend}, \bibinfo{author}{Nadish \surnamestart de~Silva\surnameend}, \bibinfo{author}{Kohei \surnamestart Kishida\surnameend} \& \bibinfo{author}{Shane \surnamestart Mansfield\surnameend} (\bibinfo{year}{2018}): \emph{\bibinfo{title}{{Minimum Quantum Resources for Strong Non-Locality}}}.
\newblock In \bibinfo{editor}{Mark~M. \surnamestart Wilde\surnameend}, editor: {\slshape \bibinfo{booktitle}{12th Conference on the Theory of Quantum Computation, Communication and Cryptography (TQC 2017)}}, {\slshape \bibinfo{series}{Leibniz International Proceedings in Informatics (LIPIcs)}}~\bibinfo{volume}{73}, \bibinfo{publisher}{Schloss Dagstuhl -- Leibniz-Zentrum f{\"u}r Informatik}, \bibinfo{address}{Dagstuhl, Germany}, pp. \bibinfo{pages}{9:1--9:20}, \doi{10.4230/LIPIcs.TQC.2017.9}.

\bibitemdeclare{article}{Abramsky2011}
\bibitem{Abramsky2011}
\bibinfo{author}{Samson \surnamestart Abramsky\surnameend} \& \bibinfo{author}{Adam \surnamestart Brandenburger\surnameend} (\bibinfo{year}{2011}): \emph{\bibinfo{title}{The sheaf-theoretic structure of non-locality and contextuality}}.
\newblock {\slshape \bibinfo{journal}{New Journal of Physics}} \bibinfo{volume}{13}(\bibinfo{number}{11}), p. \bibinfo{pages}{113036}, \doi{10.1088/1367-2630/13/11/113036}.

\bibitemdeclare{article}{acin2016certified}
\bibitem{acin2016certified}
\bibinfo{author}{Antonio \surnamestart Ac{\'\i}n\surnameend} \& \bibinfo{author}{Lluis \surnamestart Masanes\surnameend} (\bibinfo{year}{2016}): \emph{\bibinfo{title}{Certified randomness in quantum physics}}.
\newblock {\slshape \bibinfo{journal}{Nature}} \bibinfo{volume}{540}(\bibinfo{number}{7632}), pp. \bibinfo{pages}{213--219}, \doi{10.1038/nature20119}.

\bibitemdeclare{article}{Anders_2009}
\bibitem{Anders_2009}
\bibinfo{author}{Janet \surnamestart Anders\surnameend} \& \bibinfo{author}{Dan~E. \surnamestart Browne\surnameend} (\bibinfo{year}{2009}): \emph{\bibinfo{title}{Computational Power of Correlations}}.
\newblock {\slshape \bibinfo{journal}{Physical Review Letters}} \bibinfo{volume}{102}(\bibinfo{number}{5}), \doi{10.1103/physrevlett.102.050502}.

\bibitemdeclare{article}{Barrett_2006}
\bibitem{Barrett_2006}
\bibinfo{author}{Jonathan \surnamestart Barrett\surnameend}, \bibinfo{author}{Adrian \surnamestart Kent\surnameend} \& \bibinfo{author}{Stefano \surnamestart Pironio\surnameend} (\bibinfo{year}{2006}): \emph{\bibinfo{title}{Maximally Nonlocal and Monogamous Quantum Correlations}}.
\newblock {\slshape \bibinfo{journal}{Physical Review Letters}} \bibinfo{volume}{97}(\bibinfo{number}{17}), \doi{10.1103/physrevlett.97.170409}.

\bibitemdeclare{article}{Bell1964}
\bibitem{Bell1964}
\bibinfo{author}{John~S. \surnamestart Bell\surnameend} (\bibinfo{year}{1964}): \emph{\bibinfo{title}{On the Einstein Podolsky Rosen paradox}}.
\newblock {\slshape \bibinfo{journal}{Physics Physique Fizika}} \bibinfo{volume}{1}(\bibinfo{number}{3}), pp. \bibinfo{pages}{195--200}, \doi{10.1103/PhysicsPhysiqueFizika.1.195}.

\bibitemdeclare{article}{Bell1966}
\bibitem{Bell1966}
\bibinfo{author}{John~S. \surnamestart Bell\surnameend} (\bibinfo{year}{1966}): \emph{\bibinfo{title}{On the Problem of Hidden Variables in Quantum Mechanics}}.
\newblock {\slshape \bibinfo{journal}{Reviews of Modern Physics}} \bibinfo{volume}{38}(\bibinfo{number}{3}), pp. \bibinfo{pages}{447--452}, \doi{10.1103/RevModPhys.38.447}.

\bibitemdeclare{article}{Brassard_2005}
\bibitem{Brassard_2005}
\bibinfo{author}{G.~\surnamestart Brassard\surnameend}, \bibinfo{author}{A.A. \surnamestart Methot\surnameend} \& \bibinfo{author}{A.~\surnamestart Tapp\surnameend} (\bibinfo{year}{2005}): \emph{\bibinfo{title}{Minimum entangled state dimension required for pseudo-telepathy}}.
\newblock {\slshape \bibinfo{journal}{Quantum Information and Computation}} \bibinfo{volume}{5}(\bibinfo{number}{4 \& 5}), pp. \bibinfo{pages}{275--284}, \doi{10.26421/qic5.45-2}.

\bibitemdeclare{article}{Bravyi_2018}
\bibitem{Bravyi_2018}
\bibinfo{author}{Sergey \surnamestart Bravyi\surnameend}, \bibinfo{author}{David \surnamestart Gosset\surnameend} \& \bibinfo{author}{Robert \surnamestart König\surnameend} (\bibinfo{year}{2018}): \emph{\bibinfo{title}{Quantum advantage with shallow circuits}}.
\newblock {\slshape \bibinfo{journal}{Science}} \bibinfo{volume}{362}(\bibinfo{number}{6412}), pp. \bibinfo{pages}{308--311}, \doi{10.1126/science.aar3106}.

\bibitemdeclare{article}{CHSH}
\bibitem{CHSH}
\bibinfo{author}{John~F. \surnamestart Clauser\surnameend}, \bibinfo{author}{Michael~A. \surnamestart Horne\surnameend}, \bibinfo{author}{Abner \surnamestart Shimony\surnameend} \& \bibinfo{author}{Richard~A. \surnamestart Holt\surnameend} (\bibinfo{year}{1969}): \emph{\bibinfo{title}{Proposed Experiment to Test Local Hidden-Variable Theories}}.
\newblock {\slshape \bibinfo{journal}{Phys. Rev. Lett.}} \bibinfo{volume}{23}, pp. \bibinfo{pages}{880--884}, \doi{10.1103/PhysRevLett.23.880}.

\bibitemdeclare{inproceedings}{Cleve_2004}
\bibitem{Cleve_2004}
\bibinfo{author}{R.~\surnamestart Cleve\surnameend}, \bibinfo{author}{P.~\surnamestart Hoyer\surnameend}, \bibinfo{author}{B.~\surnamestart Toner\surnameend} \& \bibinfo{author}{J.~\surnamestart Watrous\surnameend} (\bibinfo{year}{2004}): \emph{\bibinfo{title}{Consequences and limits of nonlocal strategies}}.
\newblock In: {\slshape \bibinfo{booktitle}{Proceedings. 19th IEEE Annual Conference on Computational Complexity, 2004.}}, pp. \bibinfo{pages}{236--249}, \doi{10.1109/CCC.2004.1313847}.

\bibitemdeclare{article}{colbeck2009quantum}
\bibitem{colbeck2009quantum}
\bibinfo{author}{Roger \surnamestart Colbeck\surnameend} (\bibinfo{year}{2009}): \emph{\bibinfo{title}{Quantum and relativistic protocols for secure multi-party computation}}.
\newblock {\slshape \bibinfo{journal}{arXiv preprint arXiv:0911.3814}}.

\bibitemdeclare{article}{Elitzur_1992}
\bibitem{Elitzur_1992}
\bibinfo{author}{Avshalom~C. \surnamestart Elitzur\surnameend}, \bibinfo{author}{Sandu \surnamestart Popescu\surnameend} \& \bibinfo{author}{Daniel \surnamestart Rohrlich\surnameend} (\bibinfo{year}{1992}): \emph{\bibinfo{title}{Quantum nonlocality for each pair in an ensemble}}.
\newblock {\slshape \bibinfo{journal}{Physics Letters A}} \bibinfo{volume}{162}(\bibinfo{number}{1}), pp. \bibinfo{pages}{25--28}, \doi{10.1016/0375-9601(92)90952-i}.

\bibitemdeclare{article}{Grasselli_2023}
\bibitem{Grasselli_2023}
\bibinfo{author}{Federico \surnamestart Grasselli\surnameend}, \bibinfo{author}{Gláucia \surnamestart Murta\surnameend}, \bibinfo{author}{Hermann \surnamestart Kampermann\surnameend} \& \bibinfo{author}{Dagmar \surnamestart Bruß\surnameend} (\bibinfo{year}{2023}): \emph{\bibinfo{title}{Boosting device-independent cryptography with tripartite nonlocality}}.
\newblock {\slshape \bibinfo{journal}{Quantum}} \bibinfo{volume}{7}, p. \bibinfo{pages}{980}, \doi{10.22331/q-2023-04-13-980}.

\bibitemdeclare{article}{Greenberger_1990}
\bibitem{Greenberger_1990}
\bibinfo{author}{Daniel~M. \surnamestart Greenberger\surnameend}, \bibinfo{author}{Michael~A. \surnamestart Horne\surnameend}, \bibinfo{author}{Abner \surnamestart Shimony\surnameend} \& \bibinfo{author}{Anton \surnamestart Zeilinger\surnameend} (\bibinfo{year}{1990}): \emph{\bibinfo{title}{Bell's theorem without inequalities}}.
\newblock {\slshape \bibinfo{journal}{American Journal of Physics}} \bibinfo{volume}{58}(\bibinfo{number}{12}), pp. \bibinfo{pages}{1131--1143}, \doi{10.1119/1.16243}.

\bibitemdeclare{inbook}{Greenberger_1989}
\bibitem{Greenberger_1989}
\bibinfo{author}{Daniel~M. \surnamestart Greenberger\surnameend}, \bibinfo{author}{Michael~A. \surnamestart Horne\surnameend} \& \bibinfo{author}{Anton \surnamestart Zeilinger\surnameend} (\bibinfo{year}{1989}): \emph{\bibinfo{title}{Going Beyond Bell's Theorem}}, pp. \bibinfo{pages}{69--72}.
\newblock \bibinfo{publisher}{Springer Netherlands}, \doi{10.1007/978-94-017-0849-4_10}.

\bibitemdeclare{article}{Raussendorf_2013}
\bibitem{Raussendorf_2013}
\bibinfo{author}{Robert \surnamestart Raussendorf\surnameend} (\bibinfo{year}{2013}): \emph{\bibinfo{title}{Contextuality in measurement-based quantum computation}}.
\newblock {\slshape \bibinfo{journal}{Physical Review A}} \bibinfo{volume}{88}(\bibinfo{number}{2}), \doi{10.1103/physreva.88.022322}.

\bibitemdeclare{inbook}{Raussendorf_2023}
\bibitem{Raussendorf_2023}
\bibinfo{author}{Robert \surnamestart Raussendorf\surnameend} (\bibinfo{year}{2023}): \emph{\bibinfo{title}{Putting Paradoxes to Work: Contextuality in Measurement-Based Quantum Computation}}, pp. \bibinfo{pages}{595--622}.
\newblock \bibinfo{publisher}{Springer International Publishing}, \doi{10.1007/978-3-031-24117-8_16}.

\end{thebibliography}

\newpage
\appendix
\section{Illustrations of the paradoxes}
\label{AppendixA}

\begin{figure}[!h]
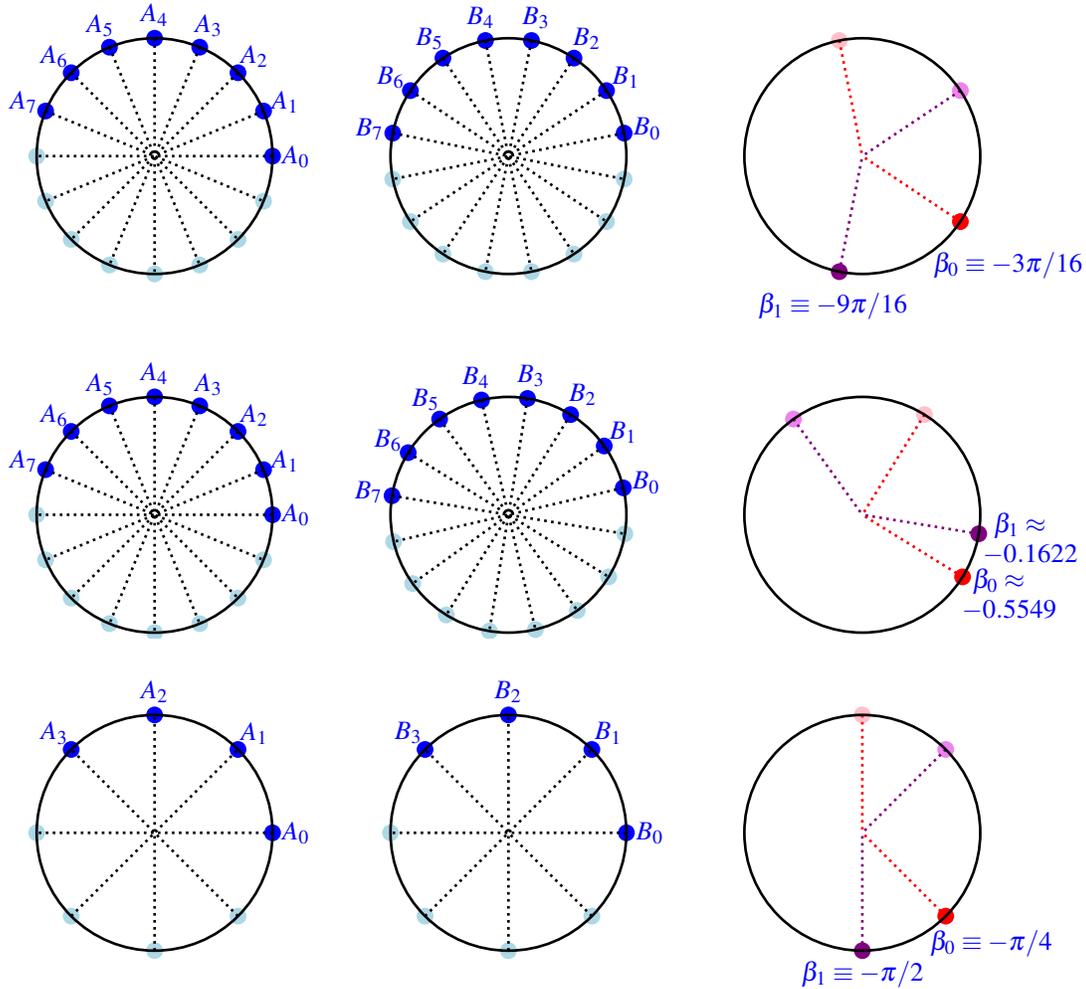

    \centering
    \resizebox{\textwidth}{!}{\input{media/case_c.pgf}}
    \resizebox{\textwidth}{!}{\input{media/case_d.pgf}}
    \resizebox{\textwidth}{!}{\input{media/non_max_rank.pgf}}
    \caption{From top to bottom: a paradox satisfying condition (c) of Theorem \ref{thm:new_paradoxes}, where $N$ = 8, $s = 6$, and $t' = 1$; a paradox satisfying condition (d), where $N = 8$, $s = 4$, $s' = 6$, and $t' = 1$; a non-{\regular} paradox satisfying $N=4$, $\delta (\lambda, C_l) \equiv \frac{3\pi}{4}$ for $l \in \Z_2$, $\beta(\lambda, C_0) \equiv -\frac{\pi}{4} $, and $\beta(\lambda, C_1) \equiv -\frac{\pi}{2}$.}
    \label{fig:diagram2}
\end{figure}

We present some more graphical illustrations of the paradoxes. In all our examples $|M_1| = |M_2| = N$, and $M_3 = \{C_0, C_1\}$. The existence of impossible events within a given context $(A, B, C)$ depends on whether one of the following conditions holds: \begin{align}
    A+B &\equiv \beta(\lambda, C) \quad \text{or} \quad \beta(\lambda, C)+ \pi \label{eq:26}\\
    A+B &\equiv \beta(\lambda, C+\pi) \quad \text{or} \quad \beta(\lambda, C+\pi) + \pi \label{eq:27}.
\end{align} 
Intuitively, given a $C_l \in M_3$, many of these equations must hold for various choices of $(A, B)$ in order to witness a paradox. By identifying $\pioverN  \Z_{N}$ with $\Z_N$, these $\beta$ values can be interpreted as `ticks' of a clock, which are achievable by $A+B$ for many different pairs $(A,B)$. In \cite[Theorem~8]{Abramsky2017}, all the paradoxes had ticks of $\pm 1$ for $C_1 = \frac{\pi}{2}$. For the case $t=s/2$ in the family (b) of Theorem~\ref{thm:new_paradoxes}, the ticks are $\pm t$. Notably, in all these examples, the ticks remain symmetric about the $x$-axis (see Figure \ref{fig:diagram1}). Furthermore, for $C_0 = 0$, the corresponding ticks are precisely $0$ and $\pi$. When $t\neq s/2$ in the family (b) in Theorem~\ref{thm:new_paradoxes}, the ticks for $C_1$ become asymmetric, specifically $-t$ and $s-t$, although we still observe a sort of symmetry about the $x$-axis if we view the ticks as indistinguishable. The other families in Theorem~\ref{thm:new_paradoxes} exhibit even more intricate structures, as their tick values are nonzero for both measurements and may not even be rational multiples of $\pi$ (see Figure~\ref{fig:diagram2}). However, we note that the \emph{differences} between tick values (i.e.\ the angles between ticks) are still appropriate multiples of $\pioverN $.

\section{Proofs of select results}
\label{appendixB}

\subsection{Proof of Theorem~\ref{thm:inequivalent}}
\label{AppendixB.1}
Since equivalence is defined up to local unitaries and qubit permutations, and any qubit permutation preserves equivalence, we restrict our analysis to local unitaries.  
\begin{lemma}
\label{lem:equivalence}
    Let $M$ and $M'$ be sets of equatorial measurements with $|M|, |M'| \ge 2$. If $UMU^\dag = M' \mod \pi$, then $U$ must be of the form $X^{a} P_{\theta}$ (up to a global phase) for some $\theta \in [0,2\pi)$ and $a\in \Z_2$, where $P_{\theta}$ is the phase gate.  
\end{lemma}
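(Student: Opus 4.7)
The plan is to translate the statement into geometry on the Bloch sphere. An equatorial observable $E_\varphi = \cos(\varphi) X + \sin(\varphi) Y$ corresponds naturally to the unit vector $\vec{n}_\varphi = (\cos\varphi, \sin\varphi, 0)$ on the equator, and identifying $E_\varphi$ with $E_{\varphi+\pi}$ (i.e.\ working modulo $\pi$) is the same as identifying $\vec{n}_\varphi$ with its antipode $-\vec{n}_\varphi$. Thus a set of equatorial measurements modulo $\pi$ is exactly a set of antipodal pairs of points on the equator.

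First I would invoke the standard fact that conjugation $A \mapsto UAU^\dagger$ by a unitary $U \in U(2)$ preserves the span of $\{X,Y,Z\}$ and acts on it as a rotation $R_U \in SO(3)$ of the Bloch sphere (the global phase drops out, so $U$ is determined by $R_U$ only up to a phase). The hypothesis $UMU^\dagger \equiv M' \pmod{\pi}$ then says exactly that $R_U$ maps the antipodal pairs in $M$ (viewed as equatorial points) onto those of $M'$.

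Next, since $|M| \geq 2$, choose two distinct measurements $E_{\varphi_1}, E_{\varphi_2} \in M$ with $\varphi_1, \varphi_2 \in [0, \pi)$. The corresponding vectors $\vec{n}_{\varphi_1}$ and $\vec{n}_{\varphi_2}$ are distinct and not antipodal (since $\varphi_2 - \varphi_1 \not\equiv 0 \pmod{\pi}$), so they are linearly independent in $\mathbb{R}^3$ and span the $XY$-plane. Their images under $R_U$ lie on the equator, hence in the $XY$-plane. Because $R_U$ is linear and invertible, it must preserve the $XY$-plane setwise, and therefore preserve the equator.

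Finally I would classify rotations in $SO(3)$ that preserve the $XY$-plane: either they fix the $Z$-axis pointwise (rotations $R_Z(\theta)$ about the $Z$-axis), or they send $Z$ to $-Z$ (rotations by $\pi$ about some axis $(\cos\alpha, \sin\alpha, 0) \in XY$). A direct computation shows $R_Z(\theta)$ lifts (up to global phase) to $P_\theta$, while the $\pi$-rotation about $(\cos\alpha, \sin\alpha, 0)$ equals $\cos(\alpha) X + \sin(\alpha) Y = e^{i\alpha} X P_{-2\alpha}$ up to phase, so it has the form $X P_\theta$. These two cases are exactly $a = 0$ and $a = 1$ in the statement. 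The only slight subtlety is carefully justifying the lift from $SO(3)$ back to $U(2)$ up to global phase; once that is in place the rest reduces to routine $2\times 2$ matrix algebra, so I do not anticipate a genuine obstacle.
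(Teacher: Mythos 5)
Your proposal is correct and takes essentially the same route as the paper: pass to the $SO(3)$ action on the Bloch sphere, use the hypothesis $|M|\ge 2$ to force the rotation to preserve the equator, and then classify such rotations as $Z$-rotations or $\pi$-rotations about an equatorial axis, lifting back to $X^aP_\theta$. Your justification of the equator-preservation step (two linearly independent equatorial vectors mapping into the $XY$-plane force the plane to be preserved) is a mild streamlining of the paper's argument, which instead uses the Euler decomposition $R_z(\theta_1)R_x(\gamma)R_z(\theta_2)$ and the fact that two distinct great circles meet in only an antipodal pair.
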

\begin{proof}
    The map $Ad : SU(2) \rightarrow SO(3)$, given by $Ad (g)(V) = gVg^{-1}$ is a $2$-fold covering of $SU(2) \rightarrow SO(3)$ with $\mathrm{ker} (Ad) = \{ I,-I\}$. Therefore, $Ad$ induces an isomorphism $PU(2) \cong SO(3)$, which establishes that single-qubit unitaries, up to a global phase, correspond to rotations of the Bloch sphere. The rotation $R_{\vec{n}} (\theta)$ of the Bloch sphere about an arbitrary axis $\vec{n} = (n_x, n_y, n_z)$ is given by \begin{equation}
        R_{\vec{n}} (\theta) = \exp \left( -i\frac{\theta}{2}(n_x X + n_y Y + n_z Z) \right) . \end{equation} 
    An arbitrary Bloch sphere rotation can be expressed as $U = R_z (\theta_1) R_x (\gamma) R_z (\theta_2)$, where $\theta_1, \theta_2 \in [0,2\pi)$ and $\gamma \in [0,\pi]$. Importantly, $U$ maps any great circle on the Bloch sphere to another great circle, as it represents a rotation of the sphere. Let us denote the equator of the Bloch sphere by $S$. Note that $R_z (\theta) (S) = S$ for all $\theta$ and $R_x(\gamma) (S)= S$ only if $\gamma = 0$ or $\pi$. So, $U(S) \neq S$ when $\gamma \neq 0$ or $\pi$. Since $S$ and $U(S)$ are great circles, they intersect precisely at two points separated by $\pi$. This shows that for $U$ to map at least two equatorial measurements that differ by less than $\pi$ to equatorial measurements, $\gamma$ must be $0$ or $\pi$. Therefore $U$ must be of the form $R_z (\theta)$ or $R_z(\theta_1) R_x (\pi) R_z (\theta_2)$, which is equivalent (up to a global phase) to $X^a P_{\theta}$ for some $\theta \in [0,2\pi)$ and $a\in \Z_2$.
\end{proof}

\begin{lemma}\label{lem:equivalence2}
    Let $\lambda, \lambda' \neq 0$. If $(\Bsimple<\lambda>, \M)$ and $(\Bsimple<\lambda'>, \M')$ are equivalent, then $\lambda = \lambda'$. Moreover, the local unitary $U$ establishing the equivalence must be of the form \begin{equation}
        U=P_{\theta_1} \otimes P_{\theta_2} \otimes I \quad \text{or} \quad XP_{\theta_1} \otimes XP_{\theta_2} \otimes X
    \end{equation}
    such that $\theta_1 + \theta_2 \equiv 0$.
\end{lemma}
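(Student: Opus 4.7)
The plan is to apply Lemma~\ref{lem:equivalence} componentwise and then perform a short case analysis. Since each $M_i$ contains at least two distinct equatorial measurements (this holds for all paradoxes in Theorem~\ref{thm:new_paradoxes}), Lemma~\ref{lem:equivalence} forces $U$ to decompose, up to a global phase, as
\begin{equation}
U = X^{a_1} P_{\theta_1} \otimes X^{a_2} P_{\theta_2} \otimes X^{a_3} P_{\theta_3}
\end{equation}
for some $a_i \in \Z_2$ and $\theta_i \in [0, 2\pi)$. This reduces the problem to an analysis over the eight possible values of $(a_1, a_2, a_3)$.

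The key observation that trims this case analysis is that $\Bsimple<\lambda>$ is supported on $S := \{\ket{000}, \ket{001}, \ket{110}, \ket{111}\}$, and the same is true of $\Bsimple<\lambda'>$. The tensor of $X^{a_i}$ factors permutes the computational basis, and inspecting bit-flip actions shows that only the four triples $(0,0,0)$, $(0,0,1)$, $(1,1,0)$, $(1,1,1)$ preserve $S$; in particular, we must have $a_1 = a_2$. The other four cases send some basis state in $S$ outside $S$ and are therefore immediately ruled out.

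For each of the remaining four cases, I would compute $U\Bsimple<\lambda>$ by applying the local unitaries directly to $\ket{00} \otimes \ket{v_\lambda}$ and $\ket{11} \otimes \ket{w_\lambda}$, then set the result equal to $e^{i\phi}\Bsimple<\lambda'>$ and compare coefficients on each of the four basis vectors in $S$. In the cases $(0,0,1)$ and $(1,1,0)$, the $\ket{001}$ and $\ket{000}$ coefficients will respectively yield the pair of equations $\cos(\lambda/2) = \sin(\lambda'/2)$ and $\sin(\lambda/2) = \cos(\lambda'/2)$, forcing $\lambda + \lambda' = \pi$. Since $\lambda, \lambda' \in [0, \pi/2)$, this is impossible, so these cases cannot occur.

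In the remaining cases $(0,0,0)$ and $(1,1,1)$, the coefficient of $\ket{000}$ (resp.\ $\ket{111}$) reads $\cos(\lambda/2) = e^{i\phi}\cos(\lambda'/2)$, which is a positive real equation forcing $\phi \equiv 0$ and $\lambda = \lambda'$. The $\ket{001}$ and $\ket{110}$ coefficients then give (using $\sin(\lambda/2) \neq 0$) the relations $\theta_3 \equiv 0$ and $\theta_1 + \theta_2 \equiv 0$, exactly as claimed. The step with the most potential friction is keeping the eight coefficient comparisons straight, but the support-preservation argument quickly cuts the bookkeeping down to four manageable cases, none of which involve anything beyond direct calculation.
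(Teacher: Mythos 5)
Your proposal is correct and follows essentially the same route as the paper's proof: reduce $U$ to the form $\bigotimes_j X^{a_j}P_{\theta_j}$ via Lemma~\ref{lem:equivalence}, use support preservation on $\{\ket{000},\ket{001},\ket{110},\ket{111}\}$ to cut to the four cases with $a_1=a_2$, rule out $(0,0,1)$ and $(1,1,0)$ via $\lambda+\lambda'=\pi$ being incompatible with $\lambda,\lambda'\in[0,\piovertwo)$, and read off $\phi\equiv 0$, $\theta_3\equiv 0$, $\theta_1+\theta_2\equiv 0$ from the remaining coefficient comparisons. The only cosmetic difference is that you make explicit the hypothesis $|M_i|\ge 2$ needed to invoke Lemma~\ref{lem:equivalence}, which the paper leaves implicit.
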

\begin{proof}
    Let $(\Bsimple<\lambda>, \M)$ and $(\Bsimple<\lambda'>, \M')$ be equivalent via the local unitary $U$. By Lemma~\ref{lem:equivalence}, $U$ must be of the form $e^{i\phi} \bigotimes_{j=1}^3 X^{a_j} P_{\theta_j}$ for some $\phi, \theta_j \in [0,2\pi)$, $a_j \in \Z_2$ for $j=1,2,3$. Also, we have that \begin{equation}\label{eq:lambda-lambda'}
        \Bsimple<\lambda'> = e^{i\phi} \bigotimes_{j=1}^3 X^{a_j} P_{\theta_j} \Bsimple<\lambda> .
    \end{equation} Expanding Eq.\ \ref{eq:lambda-lambda'}, we have the following:
    \begin{align}
        &\cos{\frac{\lambda'}{2}} \ket{000} + \sin{\frac{\lambda'}{2}} \ket{001} + \sin{\frac{\lambda'}{2}} \ket{110} + \cos{\frac{\lambda'}{2}} \ket{111} \nonumber\\
          & = \cos{\frac{\lambda}{2}} e^{i\phi}\ket{a_1 a_2 a_3} + \sin{\frac{\lambda}{2}} e^{i(\theta_3+\phi)} \ket{a_1 a_2 \overline{a}_3} + \sin{\frac{\lambda}{2}} e^{i(\theta_1 +\theta_2 +\phi)} \ket{\overline{a}_1 \overline{a}_2 a_3} + \cos{\frac{\lambda}{2}} e^{i(\theta + \phi)} \ket{\overline{a}_1 \overline{a}_2 \overline{a}_3}, \label{eq:expanded-eq}
    \end{align}
    where $\overline{a}_j = a_j \oplus 1$ and $\theta = \theta_1 +\theta_2 + \theta_3$. It is easy to see that for Eq.\ \ref{eq:expanded-eq} to hold, the only possible choices for $(a_1, a_2, a_3)$ are $\{ (0,0,0),\ (0,0,1),\ (1,1,0),\ (1,1,1) \}$. Since all coefficients are real on the LHS, that also must be the case on the RHS. So, $\phi = 0$ or $\pi$. All the coefficients on the RHS are either $\pm \cos{\frac{\lambda}{2}}$ or $\pm \sin{\frac{\lambda}{2}}$. Comparing the amplitudes of the basis vector $\ket{000}$ on both sides, we have \begin{equation} \label{eq:app_cos}
         \cos{\frac{\lambda'}{2}} = \pm \cos{\frac{\lambda}{2}} \quad \text{or} \quad \pm \sin{\frac{\lambda}{2}}.
    \end{equation} 
    In both cases of Eq.\ \ref{eq:app_cos}, the only solutions are $\lambda '= \lambda$ or $\lambda' = \pi - \lambda$. However, the second case is not feasible since $\lambda , \lambda' < \piovertwo$. This further restricts the choice for $(a_1, a_2, a_3)$ to either $(0,0,0)$ or $(1,1,1)$. For both choices, Eq.\ \ref{eq:expanded-eq} is satisfied only if the following conditions hold: \begin{equation}
        \phi \equiv 0, \quad \theta_3+\phi \equiv 0, \quad \theta_1 +\theta_2 +\phi \equiv 0, \quad \text{and} \quad \theta_1 +\theta_2 + \theta_3 + \phi \equiv 0.  
    \end{equation}
    This reduces $U$ to the desired form.
\end{proof}

\begin{proof}[\textit{Proof of Theorem~\ref{thm:inequivalent}}]
    The triples $(\lambda, C_0, C_1)$ corresponding to the paradoxes in Theorem~\ref{thm:new_paradoxes} are distinct for different paradoxes. By Lemma~\ref{lem:equivalence2}, equivalent paradoxes must have the same $\lambda$ value and third-qubit measurements can only be reflected about the $X$-axis.  So, the only possible scenario where $(\Bsimple<\lambda>, \M)$ and $(\Bsimple<\lambda'>, \M')$ are equivalent is if $\lambda = \lambda'$ and $C'_l$ is a reflection of $C_l$ about the $X$-axis for both $l=0,1$. This is not possible since all measurements are in $[0,\pi)$. Hence, distinct triples $(\lambda, C_0, C_1)$ correspond to inequivalent paradoxes.
\end{proof}

\subsection{Proof of Lemma~\ref{lem:3context-4imposs_}}
\label{AppendixB.2}
\begin{proof}
    By Lemma~\ref{lem:impossible-events}.1, a context $(A,B,C)$ has 4 impossible events if and only if $\delta(\lambda_1, A) \equiv \delta(\lambda_2, B) \equiv \delta(\lambda_3, C) \equiv \pi$. Moreover, $\delta(\lambda, \varphi) \equiv \pi $ if and only if $\lambda=0$ or $\varphi = 0$. 
    
    If one of the three contexts is such that all measurement angles are nonzero, i.e.\ $A, B, C \neq 0$, then it follows that $\lambda_i = 0$ for all $i = 1,2,3$, implying that $\Bstate$ is the $\mathrm{GHZ}$ state.
    
    Alternatively, if a context has two nonzero measurements, e.g.\ $(A,B,0)$ with $A,B \neq 0$, then we have $\lambda_1 = \lambda_2 = 0$. When one of the contexts is $(0,0,0)$, it is impossible to derive any conclusions regarding the $\lambda_i$'s. However, if the remaining two contexts each contain exactly one nonzero measurement, then the nonzero measurements must be on different qubits, such as $(A,0,0)$ and $(0,B,0)$. In this case, we also have $\lambda_1 = \lambda_2 = 0$. Therefore $\Bstate$ is an {\twolambdaszero} state.
\end{proof}

\end{document}